\newtheoremstyle{dotless}{0cm}{}{\itshape}{}{\bfseries}{}{ }{}
\theoremstyle{dotless}
\newtheorem{theorem}{Theorem}
\newtheorem{proposition}{Proposition}
\newtheorem{lemma}{Lemma}
\theoremstyle{remark}
\newtheorem{remark}{Remark}
\theoremstyle{dotless}
\newtheorem{assumption}{Assumption}
\begin{document}
\title{Easily Computed Marginal Likelihoods from Posterior Simulation Using the THAMES Estimator}
\author{Martin Metodiev \\ Friedrich-Alexander-Universit\"{a}t Erlangen-N\"{u}rnberg \\ Universit\'{e} Paris Cit\'{e} \and
Marie Perrot-Dock\`{e}s \\ Universit\'{e} Paris Cit\'{e} \and
Sarah Ouadah \\ AgroParisTech \and
Nicholas J. Irons \\ University of Washington \and 
Adrian E. Raftery\thanks{Corresponding author.} \\ University of Washington }
\date{\today}
\maketitle 

\begin{abstract}
We propose an easily computed estimator of marginal likelihoods from
posterior simulation output, via reciprocal importance sampling,
combining earlier proposals of DiCiccio et al (1997) and Robert and
Wraith (2009).
This involves only the unnormalized posterior densities from the 
sampled parameter values, and does not involve additional simulations
beyond the main posterior simulation, or additional complicated calculations.
It is unbiased for the reciprocal of the marginal likelihood,
consistent, has finite variance, and is asymptotically normal. 
It involves one user-specified control parameter, and we derive an 
optimal way of specifying this. We illustrate it with several numerical
examples.
\end{abstract}

\baselineskip=18pt

\section{Introduction}
A key quantity in Bayesian model selection is the marginal likelihood,
also known as the evidence, the normalizing constant of the posterior density,
or the integrated likelihood. Consider a statistical model 
with parameter vector $\theta$ and data $\mathcal{D}$. Let $L(\theta) = p(\mathcal{D}|\theta)$
be the usual likelihood, and $\pi(\theta)$ be the prior distribution of 
$\theta$. Then $Z = p(\mathcal{D}) = \int L(\theta) \pi(\theta) d\theta$ is the
marginal likelihood. 

The marginal likelihood plays a key role in defining Bayes factors.
Consider two models $M_1$ and $M_2$ with marginal likelihoods $Z_1$ and
$Z_2$. Then the  Bayes factor (or ratio of posterior to prior odds) for 
model $M_1$ against $M_2$ is $B_{1,2} = Z_1 / Z_2$.

The marginal likelihood is also a critical quantity for Bayesian model 
averaging (BMA). Consider $K$ models, $M_1,\ldots,M_K$, with
prior model probabilities $\Pi_k$ (which add up to 1), and marginal
likelihoods $Z_k$.
Suppose $Q$ is a quantity of interest, such as a
parameter or a future observation to be predicted.
Then the BMA posterior distribution of $Q$ is
\begin{equation}
p(Q|\mathcal{D}) = \sum_{k=1}^K p(Q|\mathcal{D},M_k) p(M_k|\mathcal{D}),
\end{equation}
where $p(M_k|\mathcal{D})$ is the posterior model probability of $M_k$, which satisfies
$p(M_k|\mathcal{D}) \propto \Pi_k Z_k$ and $\sum_{k=1}^K p(M_k|\mathcal{D}) =1 $.
So $p(Q|\mathcal{D}) = \sum_{k=1}^K p(Q|\mathcal{D},M_k) \Pi_k Z_k / \sum_{k=1}^K \Pi_k Z_k$.

Finally, the most likely model {\it a posteriori} is the one that 
maximizes $\Pi_k Z_k$.
Choosing it minimizes the model selection error rate on average over the prior 
\citep{Jeffreys1961}.
Often the prior over the model space is chosen to be uniform,
in which case $\Pi_k = 1/K, \; \forall k$.
In this case, Bayesian model selection by choosing the most likely model
a posteriori boils down to choosing the model with the largest $Z_k$,
and hence involves only the marginal likelihoods.

Bayesian models are often estimated using Monte Carlo methods in which
a sample of values of $\theta$ is simulated from the posterior distribution.
The most common class of such methods is Markov chain Monte Carlo (MCMC).
Perhaps surprisingly, estimating the marginal likelihood from the output
of MCMC and other posterior simulation methods has turned out not to be
straightforward. Many different methods have been proposed, and none of them
is widely considered to be generally the best. \cite{Llorente&2023} provide
a comprehensive review of such methods, describing 16 different methods and,
remarkably, cite over 20 {\it other} review articles!

We seek a method that is accurate, generic and simple for estimating the
marginal likelihood from posterior simulation output.
We take this to mean that it gives accurate estimates of the marginal 
likelihood, uses posterior simulation output for just the one model being
analyzed, uses only likelihoods and prior densities of the sampled
values of $\theta$, and does not need additional simulations or 
complicated calculations. 

Some well-known methods do not satisfy our desiderata. These include
Chib's method \citep{Chib1995}, which requires complicated additional
calculations, bridge sampling \citep{MengWong1996}, which requires simulations
from two models, importance sampling, which requires additional 
simulations, and nested sampling \citep{Skilling2006}, which involves
other simulations.
They also include the harmonic mean of the likelihoods \citep{NewtonRaftery1994}, which is unbiased and consistent, but has infinite variance and is unstable, as pointed out by the original authors.

Arguably, the only methods that are accurate, generic and simple for estimating 
the marginal likelihood from MCMC by our definition  are versions of 
reciprocal importance sampling (RIS) \citep{GelfandDey1994}. 
These are based on the identity:
\begin{equation}
Z^{-1} = E_{\theta}  \left[ \frac{h(\theta)}{L(\theta) \pi(\theta)}
 \bigg| \mathcal{D} \right], 
\label{eq-risidentity}
\end{equation}
where $h(\theta)$ is a (normalized) probability density function (pdf) 
over the posterior support.
Remarkably, this holds for any pdf $h(\theta)$.
This leads to the estimator
\begin{equation}
\hat{Z}^{-1} = \frac{1}{T} \sum_{t=1}^{T}
\frac{h(\theta^{(t)})}{L(\theta^{(t)}) \pi(\theta^{(t)})} , 
\label{eq-risest}
\end{equation}
where $\theta^{(1)},\dots,\theta^{(T)}$ are simulated from the posterior using MCMC or another method. This estimator has good properties in general, provided that the tails of
the distribution $h(\theta)$ are thin enough in all directions. 
It can be hard to choose $h(\theta)$ so that it both overlaps substantially with
the posterior distribution (needed for efficiency) and has thin enough
tails, especially in higher dimensions. We propose a choice of 
$h(\theta)$ that leads to easily computed estimates and is optimal or
near optimal in a certain sense.

The paper is organized as follows. 
In Section \ref{sect-ris} we discuss reciprocal importance sampling and
its properties.
In Section \ref{sect-THAMES} we describe our proposed choice of 
$h(\theta)$ and derive some of its properties.
In Section \ref{sect-examples} we give several numerical examples, including
a multivariate Gaussian example, a Bayesian regression example, a non-Gaussian case, and a Bayesian hierarchical model.
We conclude in Section \ref{sect-discussion} with a discussion.

\section{Reciprocal Importance Sampling}
\label{sect-ris}

In general, the RIS estimator of the marginal likelihood is defined by
Equation \eqref{eq-risest}. This has several good properties.
It is unbiased, in the sense that $E[\hat{Z}^{-1}] = {Z}^{-1}$,
where the expectation is over the posterior distribution of $\theta$.
It is also strongly simulation-consistent, in the sense that $\hat{Z}^{-1} \longrightarrow {Z}^{-1}$ almost surely as $T \longrightarrow \infty$.

In addition, the RIS estimator of the reciprocal marginal likelihood,
$\hat{Z}^{-1}$, has finite variance and is asymptotically normally 
distributed as $T \longrightarrow \infty$ {\it if} the tails of 
$h(\theta)$ are thin enough. Specifically, this requires that
\begin{equation}
\int \frac{h(\theta)^2}{L(\theta)\pi(\theta)} d\theta < \infty.
\end{equation}

It is hard to choose $h(\theta)$ so that it both overlaps substantially with the area of the parameter space with high posterior density, which is needed
for efficiency, and so that it also has thin enough tails, which is needed for 
finite variance. The difficulty grows as the dimension increases.

Two choices of $h(\theta)$ in the literature deserve attention.
\citet{DiCiccio&1997} proposed $h(\theta) = MVN (\theta; \hat{\theta}, \hat{\Sigma})$,
where $\hat{\theta}$ is the posterior mean or mode, and
$\hat{\Sigma}$ is an estimate of the posterior covariance matrix.
This overlaps nicely with $L(\theta) \pi(\theta)$, but its tails
may not be thin enough when the posterior is asymmetric or the parameter
is high-dimensional.

To remedy the problem of the tails possibly being too thick, 
\citet{DiCiccio&1997} proposed truncating it, using instead
$h(\theta) = TMVN_{A}(\hat{\theta}, \hat{\Sigma})$, a multivariate normal
distribution truncated to the set $A$, where
\begin{equation}
A = \{\theta: (\theta - \hat{\theta})^T \hat{\Sigma}^{-1}
 (\theta - \hat{\theta}) < c^2 \}.
\label{eq-A}
\end{equation}
Thus $A$ is an ellipsoid with radius $c$ and volume
\begin{equation}
V(A) = c^d \pi^{d/2} |\hat{\Sigma}|^{1/2} / \Gamma \left( \frac{d}{2} + 1 \right) .
\label{eq-V}
\end{equation}
Truncating the distribution ensures that the estimator $\hat{Z}^{-1}$
has finite variance.
They found that the truncation improved the performance of the RIS estimator.
However, with high-dimensional parameters, the result might be 
sensitive to the specification of $\Sigma$.

\citet{RobertWraith2009} proposed setting $h(\theta)$ to be a uniform 
distribution on the convex hull of simulated MCMC parameters values in
the $\alpha$-HPD region, namely the highest posterior density region 
containing a proportion $\alpha$ of the sampled parameter values.
They considered the values $\alpha=0.1$ and 0.25. They applied it to 
a two-dimensional toy example where it performed well.

However, as far as we know, 
the method has not yet been fully developed for realistic, 
higher-dimensional situations. For example, we know of no simple 
way to compute the volume of the convex hull of a set of points in
higher dimensions, which is required for the method in general. 
It is also not clear how best to choose $\alpha$
nor how sensitive the method would be to $\alpha$ in higher dimensions.
It has been used in a higher-dimensions application by 
\citet{Durmus&2018}, but this involved comparing competing models defined
on the same parameter space, thus avoiding the need to calculate the 
volume of $A$, which canceled out in Bayesian model comparisons. Calculating the volume
of $A$ may be the most difficult part of this method in general.

\section{Estimating the marginal likelihood}
\label{sect-THAMES}
\subsection{Estimating the marginal likelihood with THAMES}
We propose combining the proposals of \citet{DiCiccio&1997} and
\citet{RobertWraith2009} to obtain a method that we believe satisfies all our 
desiderata. We propose specifying $h(\theta)$ to be a uniform distribution,
but to be uniform over the set $A$ defined in Equation \eqref{eq-A}, rather than over
a convex hull of points. This resolves the problem of computing the volume
of $A$, since this is given analytically by Equation \eqref{eq-V}. If $A$ is not a subset of the posterior support, for example if the posterior support is constrained, we adjust the volume of $A$ by a simple Monte Carlo approximation.

This yields the
estimator 
\begin{equation}
 \hat{Z}^{-1} = \frac{1}{V(A)T}
 \sum_{\substack{t=1 \\ \theta^{(t)} \in A}}^T 
  \frac{1}{L(\theta^{(t)}) \pi(\theta^{(t)})}.
\label{eq-Zhat}
\end{equation}
Thus $\hat{Z}$ is a truncated harmonic mean of the unnormalized
posterior densities, $L(\theta^{(t)}) \pi(\theta^{(t)})$.\footnote{Recall
that the unstable harmonic mean estimator described by \citep{NewtonRaftery1994} was quite different, not being truncated, and being a harmonic mean of the likelihoods rather than the unnormalized posterior density values.}
We call it the Truncated HArmonic Mean EStimator, or THAMES.

The THAMES, $\hat{Z}^{-1}$, has several desirable properties. It is simple to compute,
involving only the prior and likelihood values of the sampled parameter
values. In fact it involves only the product of the prior and likelihood values,
namely the unnormalized posterior densities of the sampled parameter values.
It is unbiased as an estimator of $Z^{-1}$. It is also 
simulation-consistent, in the sense that 
$\hat{Z}^{-1} \longrightarrow {Z}^{-1}$ almost surely
as $T \longrightarrow \infty$, by the strong law of large numbers.
Its variance (over simulation from the posterior given the data $\mathcal{D}$)
is finite provided that 
\begin{equation}
\int_A \left( L(\theta) \pi(\theta) \right)^{-1} d\theta < \infty,
\label{eq-finitevariance}
\end{equation}
which will usually hold since $A$ is a bounded set in $\mathbbm{R}^d$. In fact, it suffices that the likelihood and the prior are continuous with respect to $\theta$ and strictly positive on the closure of $A$.
If Equation \eqref{eq-finitevariance} holds, $\hat{Z}^{-1}$ is asymptotically normal
(again as the number of parameter values simulated increases),
by the Lindeberg central limit theorem. Note that asymptotic normality
holds on the scale of $\hat{Z}^{-1}$, and not exactly on other scales such as
$\hat{Z}$ or $\log (\hat{Z})$. 

If the posterior simulation method yields independent draws, then 
Var($\hat{Z}^{-1}$) can be estimated directly as the empirical variance
of the values of 
$\left( L(\theta^{(t)}) \pi (\theta^{(t)}) \mathbbm{1}(\theta^{(t)} \in A) \right)^{-1}$, divided by $V(A)^2$. %Used to be "divided by T" but the empirical variance divides by T by definition and also 1/V(A)^2 is missing in the expression. 
If MCMC is used, successive simulations from the posterior will in general
not be independent. A central limit theorem will still hold, but the variance
needs to take account of the serial dependence. This can be done approximately
by computing the variance based on serial independence and multiplying it
by an estimate of the spectral density of the sequence at zero.
For example, if the sequence of values of $1/ \left( L(\theta) \pi(\theta) \right)$ can be approximated by a first-order autoregressive model with parameter
$\phi$, then this would be approximately $1/(1-\phi)^2$.
An alternative would be to thin the sequence enough that the resulting
subsequence is approximately uncorrelated and then use the variance 
based on assuming independence. A different approach was taken by
\citet{Fruhwirth2004}.

Note that an approximate normal confidence interval can be obtained for $\hat{Z}^{-1}$,
because that is the scale on which a central limit theorem holds.
This could be turned into a confidence interval for $\hat{Z}$ by taking the
reciprocals of the ends of the normal confidence intervals for $\hat{Z}^{-1}$;
the resulting confidence interval would not be symmetric. 
The same could be done for $\log(\hat{Z})$ in a similar manner. 

\subsection{Optimal choice of control parameter, $c$}
\label{ssec:copt}
We now address the question of how to choose the radius $c$ of the ellipse
that specifies the THAMES in Equation \eqref{eq-A}. Ignoring serial correlation between
simulated values of the parameters, we suggest choosing $c$ to 
minimize the estimated variance of $\hat{Z}^{-1}$. This could be done
empirically by computing $\hat{Z}^{-1}$ for a range of values of $c$,
estimating Var($\hat{Z}^{-1}$) for each value of $c$, and optimizing it
over $c$ by a grid search or a one-dimensional numerical optimization method. 

It is possible to obtain analytic results in the case where the posterior
distribution is normal. This is of considerable interest as the posterior
distribution is asymptotically normal in many common situations,
including some where standard regularity conditions do not hold
\citep{HeydeJohnstone1979,Ghosal2000,Shen2002,Miller2021}.
In this case the THAMES has finite variance since the posterior density, and thus the product of the likelihood and the prior, is continuous with respect to $\theta$ and strictly positive everywhere.%product of the
%likelihood and the prior density is bounded in any bounded set, and so 
%$V(A)^{-2} \int_A \left( L(\theta) \pi (\theta) \right)  d\theta < \infty$
%for any set $A$.  

We want to minimize the variance of the THAMES. Due to our assumption of independence of all of the successive MCMC simulations, this variance can be simplified to \begin{equation}
    Var(\hat{Z}^{-1}|\mathcal{D})=\frac{1}{T}\cdot \frac{1}{Z^2}\cdot SCV(d,c).
\end{equation} 

Here $SCV(d,c)$ denotes \begin{equation}
    SCV(d,c):=\frac{Var_{\theta^{(1)}}\left(\left.\frac{\mathbbm{1}_A(\theta^{(1)})/V(A)}{L(\theta^{(1)})\pi(\theta^{(1)})}\right|\mathcal{D}\right)}{E_{\theta^{(1)}}\left(\left.\frac{\mathbbm{1}_A(\theta^{(1)})/V(A)}{L(\theta^{(1)})\pi(\theta^{(1)})}\right|\mathcal{D}\right)^2},
\end{equation}  the squared coefficient of variation of the first term of the THAMES. Since the variance is a product of $\frac{1}{T},\frac{1}{Z^2}$ and $SCV(d,c)$, minimizing $SCV(d,c)$ with respect to $c$ is equivalent to minimizing the variance of the THAMES. 

We derive a statement about the optimal choice of $c$ by assuming that the posterior covariance matrix $\Sigma$ and the posterior mean $m$ can be provided by a stochastic oracle. The THAMES can then be defined using \begin{equation}
A_{or} := \{\theta: (\theta - m)^T \Sigma^{-1}
 (\theta - m) < c^2 \}.
\label{eq-A_or}
\end{equation} Interestingly, in this case the $SCV$ depends neither on the data, $\mathcal{D}$, nor on the number of samples from the posterior, $T$. Of course, this is rarely the case in practice. However, plugging in consistent estimators of $m,\Sigma$ approximately gives the same results if the number of samples from the posterior is large enough. This is due to the continuous mapping theorem.

The proofs of these results are given in Appendix 1.

\; \;
\begin{assumption}For the following theorems it is assumed that we can ignore serial correlation (i.e. we assume independence of all of the successive MCMC simulations) and that the posterior distribution is normal with mean $m\in\mathbbm{R}^d$ and a positive definite covariance matrix \\$\Sigma\in\mathcal{M}_{d\times d}(\mathbbm{R})$. We further assume that the THAMES is defined on $A_{or}$.\end{assumption}

\begin{theorem}\label{thm-limit_behav_c_d}
There exists a unique radius $c_d\in(0,\infty)$ such that the ellipse $A_{or}$ with radius $c_d$ minimizes the variance of the THAMES. This value $c_d$ does not depend on the posterior mean or covariance matrix. It satisfies $c_d=\sqrt{d+L_d}$, where the optimal shifting parameter $L_d\geq0$ is a sequence for which $\frac{L_d}{d}\stackrel{d\to\infty}\to0$ holds.\end{theorem}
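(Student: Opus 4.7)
The plan is to exploit the normality assumption to collapse the SCV into a univariate function of $c$ and $d$ alone, and then analyze its first-order condition. Writing $L(\theta)\pi(\theta) = Z\cdot\phi_{m,\Sigma}(\theta)$ with $\phi_{m,\Sigma}$ the $\mathcal{N}(m,\Sigma)$ density, the ratio $\mathbbm{1}_{A_{or}}(\theta)/[V(A_{or})L(\theta)\pi(\theta)]$ reorganizes into a function of the standardized squared distance $r^2 := (\theta-m)^T\Sigma^{-1}(\theta-m)$, because the $|\Sigma|^{1/2}$ appearing in $1/\phi_{m,\Sigma}$ exactly cancels the $|\Sigma|^{1/2}$ in $V(A_{or})$ from Equation~\eqref{eq-V}. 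Under the posterior, $r^2\sim\chi^2_d$, so both the numerator and the denominator of $SCV(d,c)$ reduce to one-dimensional chi-square integrals whose only free parameters are $c$ and $d$; this immediately proves independence from $m$ and $\Sigma$.

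After routine rearrangements, minimizing $SCV(d,c)$ over $c$ is equivalent to minimizing $F(c) := G(c,d)/c^{2d}$, where $G(c,d) := \int_0^{c^2/2} v^{d/2-1}e^v\, dv$. The first-order condition $F'(c)=0$ reduces to $h(c) := cG'(c) - 2d\,G(c) = 0$. Since $G'(c) = c^{d-1}e^{c^2/2}/2^{d/2-1}$, direct differentiation yields the clean identity
\[
h'(c) = G'(c)\,(c^2 - d).
\]
Together with $h(0)=0$, this forces $h$ to be strictly decreasing on $(0,\sqrt{d})$ and strictly increasing on $(\sqrt{d},\infty)$; since $h(c)\to+\infty$ as $c\to\infty$ (a one-step integration by parts comparing $cG'(c)$ to $2dG(c)$), $h$ has a unique zero $c_d\in(\sqrt{d},\infty)$, which must be the unique minimizer of $F$. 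This establishes $c_d > \sqrt{d}$, so $L_d := c_d^2 - d > 0$.

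For the asymptotic claim $L_d/d\to 0$, one integration by parts recasts the optimality equation $h(c_d)=0$ as $\int_0^{u_d} v^{d/2}e^v\, dv = u_d^{d/2}e^{u_d}/2$, where $u_d := c_d^2/2$. The substitution $s=v-u_d$ brings this to
\[
J(u_d,d) := \int_{-u_d}^0 \left(1+\frac{s}{u_d}\right)^{d/2}e^s\, ds = \tfrac{1}{2}.
\]
Fixing $\delta>0$ and setting $u=d(1+\delta)/2$, I would use dominated convergence---with the uniform bound $(1+s/u)^{d/2}\le 1$ for $s\in[-u,0]$ and the pointwise limit $(1+s/u)^{d/2}\to e^{s/(1+\delta)}$---to show $J(u,d)\to (1+\delta)/(2+\delta)>\tfrac{1}{2}$ as $d\to\infty$. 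Because $J(\cdot,d)>\tfrac{1}{2}$ is equivalent to $h(\cdot)>0$, and $h$ is strictly increasing past $\sqrt{d}$, this forces $c_d^2 < d(1+\delta)$ for all sufficiently large $d$. Combined with $c_d^2>d$, one obtains $0<L_d/d<\delta$ eventually, and since $\delta$ was arbitrary, $L_d/d\to 0$.

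The main technical obstacle will be this last step: executing the dominated convergence argument cleanly despite both the integrand and its domain depending on $d$, and verifying the pointwise limit rigorously via a uniform-on-compacts Taylor expansion of $\log(1+s/u)$. The earlier work---the reduction by standardization and the uniqueness argument via the remarkably clean identity $h'(c)=G'(c)(c^2-d)$---should be essentially mechanical once the right reformulation is in hand.
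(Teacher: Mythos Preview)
Your argument is correct and takes a genuinely different, more elementary route than the paper. The paper's proof of Theorem~\ref{thm-limit_behav_c_d} is circuitous: uniqueness is obtained by computing the second derivative of $SCV(d,c)$ at stationary points and arguing that minima lie above $\sqrt{d}$ while maxima lie below, and the asymptotic $L_d/d\to 0$ is obtained only \emph{after} establishing the bounds of Theorem~\ref{thm-limit_behav_SCV}, which in turn depend on Proposition~\ref{prop-exact_bounds_SCV} (requiring the double-factorial series expansion of Lemma~\ref{lem-simple_sol_f}) and Stirling's formula. The paper then argues by contradiction that if $\limsup L_d/d>\alpha>0$ the SCV bound would blow up.

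Your approach bypasses all of this machinery. The identity $h'(c)=G'(c)(c^2-d)$ gives uniqueness and $c_d>\sqrt{d}$ in one stroke, without touching second derivatives of $SCV$. For the asymptotics, you work directly on the first-order condition, recasting it as $J(u_d,d)=\tfrac{1}{2}$ and then evaluating $J(d(1+\delta)/2,d)$ by dominated convergence; the equivalence $J(u,d)>\tfrac{1}{2}\Leftrightarrow h(\sqrt{2u})>0$ combined with the monotonicity of $h$ past $\sqrt{d}$ then pins down $c_d^2<d(1+\delta)$. This is shorter and avoids both the explicit series representation of $f$ and any appeal to the SCV upper bound. The price is that your route yields only $L_d=o(d)$ and says nothing further about the rate, whereas the paper's detour through Theorem~\ref{thm-limit_behav_SCV} also delivers the quantitative SCV inequalities used elsewhere; but for Theorem~\ref{thm-limit_behav_c_d} as stated, your argument is self-contained and cleaner.
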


\begin{remark}\label{rem-thm1} Theorem \ref{thm-limit_behav_c_d} ensures that the optimal radius $c_d$ is asymptotically equivalent to $\sqrt{d}$.  In fact, our calculations suggest that $c_d=\sqrt{d+L_d}$ can be approximated by $\sqrt{d+1}$: \\The shifting parameter $L_d$ approaches 1 exponentially fast (Figure \ref{fig-l_apprx_sol}), with \begin{alignat*}{2}
    |1-L_d|&\leq 0.5\quad &d\geq1,\\|1-L_d|&\leq 0.05\quad &d\geq10,\\|1-L_d|&\leq 0.005 \quad &d\geq100.
\end{alignat*} For this reason we recommend choosing the radius $c=\sqrt{d+1}$.
\end{remark}

\begin{theorem}\label{thm-limit_behav_SCV}

The following statements hold for the $SCV$:

\begin{enumerate}
    \item\label{it-thm1-L_robust} For any choice of the shifting parameter $L,L\in \mathbbm{R}$ there exists an $\varepsilon>0$ such that\begin{equation}
    1-\varepsilon\leq \frac{SCV(d,\sqrt{d+L_d})}{\sqrt{(d+2)\pi/4}}\leq \frac{SCV(d,\sqrt{d+L})}{\sqrt{(d+2)\pi/4}}\leq 2+\varepsilon,
\end{equation} for all but finitely many $d$. Thus choosing the radius $\sqrt{d+L}$ results in an $SCV$ that is both asymptotically at most twice as large as the optimal $SCV$ and is of order $\sqrt{d}$.

\item\label{it-thm1-ineq_sqrt_d_plus_1} The following inequality for the $SCV$ can be given for choosing the radius $c=\sqrt{d+1}$:\begin{align}0.63\sqrt{(d+2)\pi/4}-1&\leq SCV(d,\sqrt{d+L_d})\\&\leq 
    SCV(d,\sqrt{d+1})\leq 1.09\cdot 2\sqrt{(d+2)\pi/4}-1
\end{align} This inequality holds for all $d\geq1$.\end{enumerate}\end{theorem}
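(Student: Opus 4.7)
The plan is to first reduce $SCV(d,c)$ under Assumption 1 to a one-dimensional integral, and then analyze that integral --- asymptotically for Part 1, and with explicit numerical control for Part 2. Under the Gaussian assumption, $L(\theta)\pi(\theta) = Z\,\phi(\theta;m,\Sigma)$, so the denominator of $SCV(d,c)$ equals $1/Z^2$ by the reciprocal-importance identity. For the numerator, change variables by $u=\Sigma^{-1/2}(\theta-m)$, pass to spherical coordinates, and apply the substitution $s=r^2/2$: the integral $\int_{A_{or}}\phi(\theta;m,\Sigma)^{-1}\,d\theta$ then reduces to a factor times $\int_0^{c^2/2} u^{d/2-1}e^u\,du$. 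After the factors of $|\Sigma|$ and $Z$ cancel, one obtains the closed form
\begin{equation*}
SCV(d,c) + 1 \;=\; \frac{2^{d-2}\, d^2\, \Gamma(d/2)}{c^{2d}} \int_0^{c^2/2} u^{d/2-1} e^{u}\,du,
\end{equation*}
and every subsequent bound concerns this expression.

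For Part 1, the middle inequality $SCV(d,\sqrt{d+L_d}) \leq SCV(d,\sqrt{d+L})$ is immediate from Theorem \ref{thm-limit_behav_c_d}, since $c_d=\sqrt{d+L_d}$ is by definition the minimizer. For the outer bounds, rewrite the integral via $u=c^2/2-v$ as $e^{c^2/2}(c^2/2)^{d/2-1}\int_0^{c^2/2}(1-2v/c^2)^{d/2-1}e^{-v}\,dv$. Matching upper and lower bounds on the inner integral --- upper via $\log(1-x)\leq -x$ extended to $[0,\infty)$, lower via Bernoulli-type inequalities on a shrunk window --- reduce it, up to lower-order terms, to $1/\lambda(L,d)$ with $\lambda(L,d) = (2d+L-2)/(d+L) \to 2$ for fixed $L$ as $d\to\infty$. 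Combining with Stirling's formula for $\Gamma(d/2)$ then gives $(SCV(d,\sqrt{d+L})+1)/\sqrt{(d+2)\pi/4} \to 1$, which sits comfortably inside $[1-\varepsilon,\,2+\varepsilon]$ for all sufficiently large $d$.

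Part 2 is the hard part: it demands explicit constants $0.63$ and $1.09$ valid for \emph{every} $d\geq1$. The plan is to (i) replace Stirling by a sharp two-sided version such as Robbins' inequalities, which control $\Gamma(d/2)$ with explicit multiplicative error factors that are monotone in $d$; (ii) replace the asymptotic Laplace argument by the integration-by-parts recursion $\int_0^{c^2/2}u^{d/2-1}e^u\,du = (2/d)(c^2/2)^{d/2}e^{c^2/2} - (2/d)\int_0^{c^2/2}u^{d/2}e^u\,du$, whose iterated (alternating) partial sums supply explicit two-sided envelopes at $c^2=d+1$; (iii) combine these to prove the inequalities for $d$ above some threshold, with the remaining small values of $d$ verified by direct evaluation of the closed form. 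The optimality chain $SCV(d,c_d)\leq SCV(d,\sqrt{d+1})$ again comes from Theorem \ref{thm-limit_behav_c_d}. The main obstacle is precisely this uniform control: producing numerical constants as clean as $0.63$ and $1.09$ forces each step of the Stirling-and-Laplace argument to be carefully matched, and the smallest values of $d$ --- where Stirling is loosest and the alternating series truncation leaves the largest residue --- likely require a separate explicit check.
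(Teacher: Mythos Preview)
Your closed-form reduction of $SCV(d,c)+1$ is correct and matches the paper's. Your upper-bound mechanism is also essentially the paper's: one integration by parts gives $\int_0^{c^2/2}u^{d/2-1}e^u\,du \le (2/d)(c^2/2)^{d/2}e^{c^2/2}$, hence $SCV(d,c)+1 \le 2^{d/2}\Gamma(d/2+1)c^{-d}e^{c^2/2}$, which after Stirling yields the $2\sqrt{(d+2)\pi/4}$ asymptotic and, with Robbins-type control, the constant $1.09$. The paper reaches the same inequality via an alternating-series representation of $f(d,c)$, but the content is identical.

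The genuine gap is the \emph{lower} bound on $SCV(d,\sqrt{d+L_d})$. Everything you do bounds $SCV$ at a \emph{fixed} radius; the minimality $SCV(d,c_d)\le SCV(d,\sqrt{d+L})$ then only transfers \emph{upper} bounds to $c_d$, never lower ones. Your Laplace estimate gives $SCV(d,\sqrt{d+L})/\sqrt{(d+2)\pi/4}\to 1$ for fixed $L$, but this says nothing about how small the minimum could be, and $L_d$ is not fixed. The paper's device, which you are missing, is to use the first-order condition at $c_d$: differentiating $SCV(d,c)$ shows that at the minimizer
\[
\frac{2d}{c_d^{d}}\int_0^{c_d} e^{r^2/2}r^{d-1}\,dr \;=\; e^{c_d^2/2},
\]
which eliminates the integral and yields the exact formula
\[
SCV(d,c_d)+1 \;=\; 2^{d/2-1}\,\Gamma(d/2+1)\,\frac{e^{(d+L_d)/2}}{(d+L_d)^{d/2}}.
\]
One then observes that $L\mapsto e^{L/2}/(d+L)^{d/2}$ is minimized over $L\ge 0$ at $L=0$; replacing $L_d$ by $0$ and applying Stirling gives the lower bound $\sqrt{(d+2)\pi/4}\bigl((d+2)/d\bigr)^{d/2}e^{-1}-1$, and the explicit constant $0.63$ drops out of the monotonicity of $((d+2)/d)^{d}$ in $d$. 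Without this first-order-condition trick, your integration-by-parts envelopes at $c^2=d+1$ cannot reach the minimum value. A minor related point: invoking Theorem~\ref{thm-limit_behav_c_d} here is circular in the paper's logical order (Theorem~\ref{thm-limit_behav_c_d} is proved \emph{from} Theorem~\ref{thm-limit_behav_SCV}); the facts you actually need---existence, uniqueness, and $c_d\ge\sqrt d$---come directly from the first- and second-order analysis of $SCV(d,c)$, independent of either theorem.
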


\begin{remark}\label{rem-stm2-thm1} 
    Statement \ref{it-thm1-L_robust} of Theorem \ref{thm-limit_behav_SCV} shows that $SCV(d,c)$ is increasing with order $\sqrt{d}$ as $d\to\infty$, both in our choice $c=\sqrt{d+1}$ and the optimal choice $c_d$. This is strongly  supported numerically by Figure \ref{fig-thames_sqrt_plot_thm1}. This also shows that $SCV(d,\sqrt{d+1})$ is extremely close to the theoretical lower bound of $SCV$.  Further, any choice of the shifting parameter $L$ used to define the radius $\sqrt{d+L}$  is asymptotically at most twice as bad as any optimal solution in terms of the $SCV$. This suggests some robustness of our estimator with respect to the choice of $L$. 
    
    Also note that $\sqrt{T}(\hat{Z}^{-1}-Z^{-1})$ is asymptotically normal. However, it is usually the case that we want to estimate the logarithm of the marginal likelihood. We can do this by using the estimator $-\log(\hat{Z}^{-1})$. The asymptotic behavior of this estimator can be determined by a delta-method approximation:\begin{align}
        \sqrt{T}(-\log(\hat{Z}^{-1})-\log(Z))\sim \mathcal{N}(0,SCV(d,c)).
    \end{align}
 Thus the $SCV$ roughly reflects the variance of our estimator on the log likelihood scale.
 \end{remark}
\; \;

\begin{remark}\label{rem-stm3-thm1}    
    Statement \ref{it-thm1-ineq_sqrt_d_plus_1} of Theorem \ref{thm-limit_behav_SCV} gives a very rough theoretical guarantee: For any dimension $d\geq1$, the $SCV$ obtained by choosing our recommendation for the radius, $\sqrt{d+1}$, and the $SCV$ obtained by choosing the optimal radius, $c_d$, can be bounded by an affine transform of $\sqrt{d+2}$. However, our calculations suggest that the $SCV$ in the point $c=\sqrt{d+1}$ has an asymptotically optimal performance. 
    
    This is well illustrated numerically by Figure \ref{fig-thames_scv_and_l}, in which
    $SCV(d,\sqrt{d+L})$ is plotted against $d$ for several values of $L$. The lower bound from Remark \ref{rem-stm2-thm1} is added. The $SCV$ of our estimator (purple) seems to approach the lower bound, while the $SCV$ resulting from other choices of $L$ is larger, but not more than twice as large as this bound,  even for small $d$.
\end{remark} 

So far, we have given results for the idealized situation where the posterior distribution is exactly normal. We now give a result for the much more common and realistic situation where the posterior distribution is only asymptotically normal.

\begin{theorem}\label{thm-n_limit_approx} Let $p_n(\theta|\mathcal{D}_n)$ be a sequence of posterior densities with data $\mathcal{D}_n$, posterior covariance matrix $\Sigma_n$, posterior mean $m_n$ and as $SCV$ denoted by $SCV_n$. Then, if \begin{align}\label{eq-unif_conv_pdf}
    |\Sigma_n|^{\frac{1}{2}}p_n\left(\left.\Sigma_n^{\frac{1}{2}}\cdot\theta+m_n\right|\mathcal{D}_n\right)\stackrel{n\to\infty}\to |\Sigma|^{\frac{1}{2}}p\left(\left.\Sigma^{\frac{1}{2}}\cdot\theta+m\right|\mathcal{D}\right)
\end{align} uniformly in $\theta$ on all compact subsets of $\mathbbm{R}^d$, it is the case that \begin{align}\label{eq-conv_scv}
    SCV_n(d,c) \stackrel{n\to\infty}\to SCV(d,c)
\end{align} uniformly in $c$ on all compact subsets of $(0,\infty)$. In particular, for any $b\geq c_d\geq a>0$, \begin{align}\label{eq-conv_argmin_scv}
    (c_d)_n \in \textup{argmin}_{c\in[a,b]}SCV_n(d,c)\;\forall n\Rightarrow \lim_{n\to\infty} (c_d)_n=c_d.
\end{align}\end{theorem}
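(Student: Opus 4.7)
The plan is to reduce both $SCV_n(d,c)$ and $SCV(d,c)$ to integrals of $1/\tilde p_n$ and $1/\tilde p$ over a fixed Euclidean ball (where $\tilde p_n(\phi):=|\Sigma_n|^{1/2}p_n(\Sigma_n^{1/2}\phi+m_n|\mathcal{D}_n)$), so that the hypothesis \eqref{eq-unif_conv_pdf} translates directly into uniform convergence of the SCV, and the argmin statement then follows from a standard compactness-plus-uniqueness argument.

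First I would rewrite $SCV$ in closed form. Using $L(\theta)\pi(\theta)=Z\,p(\theta|\mathcal{D})$, a direct calculation gives $E[X|\mathcal{D}]=1/Z$ and $E[X^2|\mathcal{D}]=(V(A)^2Z)^{-1}\int_A d\theta/p(\theta|\mathcal{D})$ for the integrand $X=\mathbbm{1}_A(\theta)/(V(A)L(\theta)\pi(\theta))$, so
\begin{equation*}
SCV(d,c)\;=\;\frac{1}{V(A)^2}\int_A\frac{d\theta}{p(\theta|\mathcal{D})}-1.
\end{equation*}
Changing variables via $\phi=\Sigma^{-1/2}(\theta-m)$ converts $A_{or}$ into the ball $B_c:=\{\|\phi\|<c\}$ with Jacobian $|\Sigma|^{1/2}$; combined with $V(A)^2=c^{2d}\pi^d|\Sigma|/\Gamma(d/2+1)^2$ from \eqref{eq-V} the factor $|\Sigma|$ cancels, yielding
\begin{equation*}
SCV(d,c)\;=\;\frac{\Gamma(d/2+1)^2}{c^{2d}\pi^d}\int_{B_c}\frac{d\phi}{\tilde p(\phi)}-1,
\end{equation*}
and the analogous identity for $SCV_n$ with $\tilde p$ replaced by $\tilde p_n$.

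Next I would show uniform convergence of these integrals in $c\in[a,b]\subset(0,\infty)$. Since $\tilde p$ is the uniform limit of continuous densities on $\bar B_b$, it is continuous there and (as is automatic in the motivating asymptotically-normal setting, where $\tilde p$ is the standard Gaussian) bounded below by some $\delta>0$. By \eqref{eq-unif_conv_pdf}, $\tilde p_n\geq\delta/2$ on $\bar B_b$ for all large $n$, so
\begin{equation*}
\sup_{\bar B_b}\bigl|\tilde p_n^{-1}-\tilde p^{-1}\bigr|\;\leq\;\frac{2}{\delta^2}\sup_{\bar B_b}|\tilde p_n-\tilde p|\;\longrightarrow\;0.
\end{equation*}
Using $|B_c|\leq|B_b|$ this gives uniform convergence of $\int_{B_c}d\phi/\tilde p_n$, and the prefactor $\Gamma(d/2+1)^2/(c^{2d}\pi^d)$ is continuous and bounded on $[a,b]$, so \eqref{eq-conv_scv} follows.

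For \eqref{eq-conv_argmin_scv} I would invoke the standard argmin-convergence argument. By Theorem \ref{thm-limit_behav_c_d}, $c_d$ is the unique minimizer of $SCV(d,\cdot)$ on $(0,\infty)$, hence also on $[a,b]\ni c_d$. Given any subsequence of $(c_d)_n$, compactness of $[a,b]$ extracts a further subsequence $(c_d)_{n_k}\to c^{\star}$; uniform convergence yields $SCV_{n_k}(d,(c_d)_{n_k})\to SCV(d,c^{\star})$ and $SCV_{n_k}(d,c)\to SCV(d,c)$ for each fixed $c\in[a,b]$, so passing to the limit in $SCV_{n_k}(d,(c_d)_{n_k})\leq SCV_{n_k}(d,c)$ gives $SCV(d,c^{\star})\leq SCV(d,c)$ for all $c\in[a,b]$, whence $c^{\star}=c_d$ by uniqueness; as this holds for every subsequence, $(c_d)_n\to c_d$. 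The only nontrivial obstacle in this scheme is the upgrade from uniform convergence of $\tilde p_n$ to uniform convergence of $1/\tilde p_n$: this requires a uniform positive lower bound on $\tilde p$ over $\bar B_b$, automatic for normal limits but potentially needing to be added as a regularity hypothesis in full generality.
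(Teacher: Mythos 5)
Your proposal is correct and its core is the same as the paper's: both reduce $SCV_n$ and $SCV$ to $\frac{1}{V(S)^2}\int_S d\phi/\tilde p_n(\phi)-1$ over a fixed ball via the affine change of variables, and both deduce uniform convergence on $[a,b]$ from the hypothesis \eqref{eq-unif_conv_pdf} together with the fact that the (standardized) densities are uniformly bounded away from zero on the closed ball of radius $b$, with the $c$-dependent prefactor controlled by its values at the endpoints of the interval. The one genuine difference is the last step: the paper passes through epigraphical convergence, citing Rockafellar and Wets (uniform convergence of continuous functions implies epi-convergence, Proposition 7.15, and then their Theorem 7.33 on convergence of minimizers), whereas you give a direct subsequence--compactness--uniqueness argument on $[a,b]$ using that $c_d$ is the unique minimizer from Theorem \ref{thm-limit_behav_c_d}. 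Your route is more elementary and self-contained and buys the same conclusion; the paper's route would generalize more readily to situations where one only has epi-convergence rather than uniform convergence. Your closing caveat about needing $\tilde p$ bounded below on $\bar B_b$ is well taken: the paper asserts this tersely (``uniformly bounded away from 0 on $S$ by the same reason''), and it is justified in context because the limiting standardized posterior is the standard Gaussian density under the paper's normality assumption, continuous and strictly positive on the compact ball, which then forces $\tilde p_n \geq \delta/2$ there for large $n$ exactly as you argue.
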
 

\begin{remark}

    We have already stated that the normal case is important because the posterior distribution is often asymptotically normal when the size of the data, $n$, is large. Theorem \ref{thm-n_limit_approx} assures that our results still hold in this limiting case, under some assumptions: 

    If the convergence of the normalized posterior pdf is uniform in $\theta$ (Equation \eqref{eq-unif_conv_pdf}), our statements about the limiting behaviour of the $SCV$ (Theorem \ref{thm-limit_behav_SCV} and Remarks \ref{rem-stm2-thm1}-\ref{rem-stm3-thm1}) still hold approximately when $n$ is large (Equation \eqref{eq-conv_scv}). If additionally any optimal radius $(c_d)_n$ does not converge to zero or infinity, any result about $c_d$ (Theorem  \ref{thm-limit_behav_c_d} and Remark \ref{rem-thm1}) also holds approximately when $n$ is large (Equation \eqref{eq-conv_argmin_scv}).
    
    Let $H_0$ denote the Fisher information matrix. Reformulating Equation \eqref{eq-unif_conv_pdf} by replacing $\Sigma_n$ by $\frac{1}{n}H_0^{-1}$, to which it is asymptotically equivalent, $\frac{1}{n}H_0^{-1}$, gives a statement that has been proven under a variety of assumptions (e.g., \cite[Theorem 4]{Miller2021}), except that in these results the type of convergence is usually not uniform convergence, but a weaker type of convergence, such as convergence in distribution or convergence in total variation. 
    
    Additional assumptions can be placed on the pdfs of the sequence of distributions such that convergence in distribution implies uniform convergence of the pdfs. For example, if the pdfs are asymptotically equicontinuous and we have convergence in distribution, the convergence of the pdfs is uniform \cite[Theorem 1]{Sweeting1986}. Note that in this case there is no problem if the parameter space is constrained: Uniform convergence of the pdfs implies that $A_n$ is a subset of the posterior support if $n$ is large enough.
\end{remark}

\begin{remark}\label{rem-heuristic-normality}    
    Due to the assumption of normality it is the case that when choosing the optimal radius $c_d=\sqrt{d+L_d}$, the probability of a term of the THAMES in $\theta^{(t)}$ not being set to 0 is equal to \begin{align}\label{eq-heuristic_calculation}
    \mathbb{P}(\theta^{(t)}\in A_{or})=\mathbbm{P}((\theta^{(t)} - m)^T \Sigma^{-1}
 (\theta^{(t)} - m) < d+L_d)=\chi^2(d+L_d;d),
\end{align} the CDF of the $\chi^2$-distribution with $d$ degrees of freedom evaluated at $d+L_d$. It approaches $50\%$ due to Theorem \ref{thm-limit_behav_c_d} (compare Figure \ref{fig-thames_hpd_region}). Thus the algorithm sets about $50\%$ of the highest terms in Equation \eqref{eq-Zhat} to 0. This means that for a large number of samples $T$ and given the normality assumption, our algorithm is similar to the following method:

Instead of checking whether $\theta^{(t)}\in A_{or}$ directly, one can set roughly $50\%$ of the highest terms of the THAMES, the terms not included in the Heighest Posterior Density (HPD) region of size $50\%$, to 0. 

We can check how this method performs in the normal case by setting \begin{align}\label{eq-heuristic_radius}
    c=\sqrt{(\chi^2)^{-1}(50\%,d)}\simeq d\left(1-\frac{2}{9d}\right)^2,
\end{align} the $50\%$-quantile of the $\chi^2$-distribution with $d$ degrees of freedom. This corresponds to setting roughly $50\%$ of the highest terms of the THAMES to 0 and dividing by the volume of the ellipse $A_{or}$ with radius defined in Equation \eqref{eq-heuristic_radius}. Figure \ref{fig-thames_ratio_opt_vs_heuristic} shows the ratio between the $SCV$ when taking this approach and the SCV when using the optimal radius. The ratio is decreasing to 1, so the heuristic performs quite well, even for small $d$.

Note that the calculations in Equation \eqref{eq-heuristic_calculation} still hold asymptotically for large $d$ even if $\theta^{(t)}|\mathcal{D}$ is not normal, but its elements $\theta^{(t)}_1,\dots,\theta^{(t)}_d$ satisfy a central limit theorem, e.g., if the entries of $\sqrt{\Sigma}(\theta^{(t)} - m)$ are independent given the data. This could justify using the heuristic in the case when asymptotic posterior normality does not hold, though the estimator is also necessarily biased in this case, with the bias vanishing as $d$ increases.
\end{remark}

\begin{remark}\label{rem-sigma_mustbe_pos_def}
    It is assumed that the covariance matrix of the posterior distribution is positive definite. This assumption is necessary since otherwise a posterior density with respect to the Lebesgue-measure on $\mathbbm{R}^d$ would not exist. On the other hand this assumption is not restrictive, since the same estimation procedure can be applied to the lower dimensional subspace of $\mathbbm{R}^d$ on which a density is defined.
\end{remark}

We can illustrate the relationship between the THAMES and the harmonic mean estimator defined by \citet{NewtonRaftery1994} using the toy example from Figure \ref{fig-thames_conv_comp}. It was calculated using the same model as the one introduced in Section \ref{ssec:multivariateGaussian} with the dimensions of the parameter space $d=2$, but by setting the data set to $\mathcal{D}\equiv0$ to ensure stability of the estimator on the inverse likelihood scale. 

The pdf of the Uniform distribution on the ellipse is essentially used as a rejection rule: Values with a very low posterior (and therefore high inverse posterior) are rejected, while high-density values are accepted. A balance between the volume of the ellipse and the percentage of the rejected posterior sample needs to be found to ensure optimal performance. The harmonic mean estimator does not have this rejection rule, so sample points with a high posterior can lead to massive jumps.

\begin{figure}
\centering
\includegraphics[scale=.5]{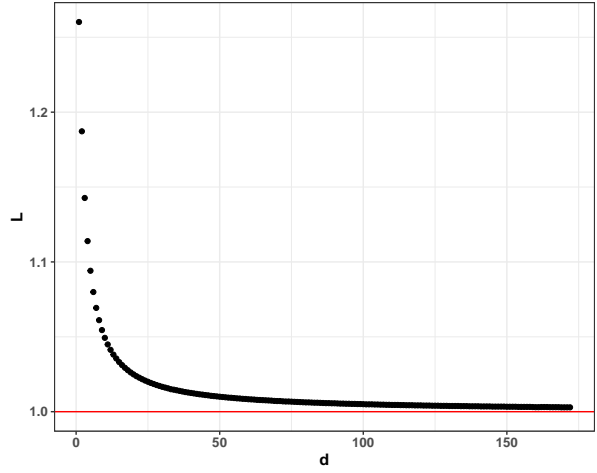}
\caption{\small{ The optimal shifting parameter $L_d$ against the dimension $d$. The optimal radius $c_d=\sqrt{d+L_d}$ approaches $\sqrt{d+1}$ exponentially fast in the sense that the optimal shifting parameter $L_d$ is decreasing and approaches $1$ exponentially fast. At $d=10$, the absolute difference between $L_d$ and 1 is smaller than $0.05$. }}\label{fig-l_apprx_sol} 
\end{figure}

\begin{figure}
\centering
\includegraphics[scale=.5]{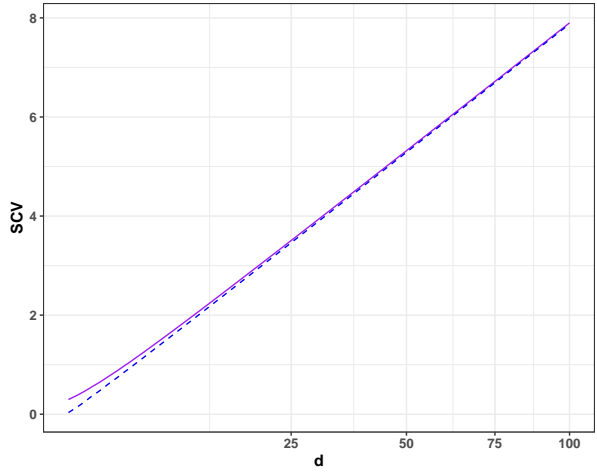}
\caption{\small{$SCV(d,\sqrt{d+1})$ (purple) and the lower bound (blue) plotted against the parameter dimension, $d$, which is plotted on the square root scale. The resulting line shows that the $SCV$ obtained by choosing the radius $\sqrt{d+1}$ is very well approximated by a square root function. It further suggests that this choice results in an $SCV$ (and thus a variance of the THAMES) that is asymptotically optimal, since the $SCV$ approaches its lower bound.}}\label{fig-thames_sqrt_plot_thm1}
\end{figure}

\begin{figure}
    \centering
    \includegraphics[scale=.5]{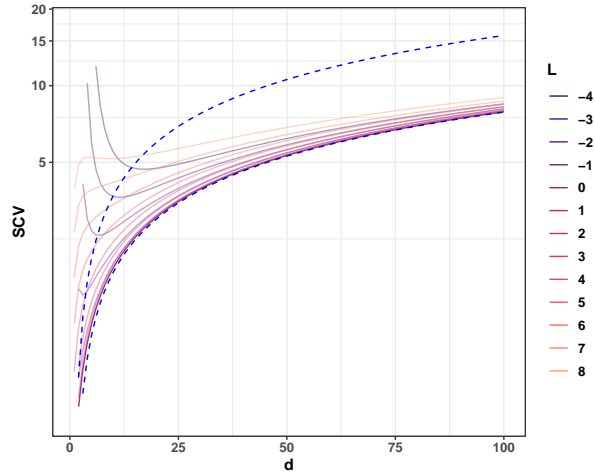}
    \caption{Asymptotic behaviour of $SCV(d,\sqrt{d+L})$ for several values of $L$ ($L=1$ is bright purple) with the lower bound and twice its value included (blue); the scale of the y-axis is logarithmic. All listed values of $L$ eventually lead to an $SCV$ which is at most twice as bad as the optimal $SCV$, even for small $d$. This illustrates that  the variance of the THAMES obtained by choosing one of these suboptimal shifting parameters is asymptotically at most twice as high as the variance obtained by choosing the optimal parameter.}
    \label{fig-thames_scv_and_l}
\end{figure}

\begin{figure}
\centering
\includegraphics[scale=.5]{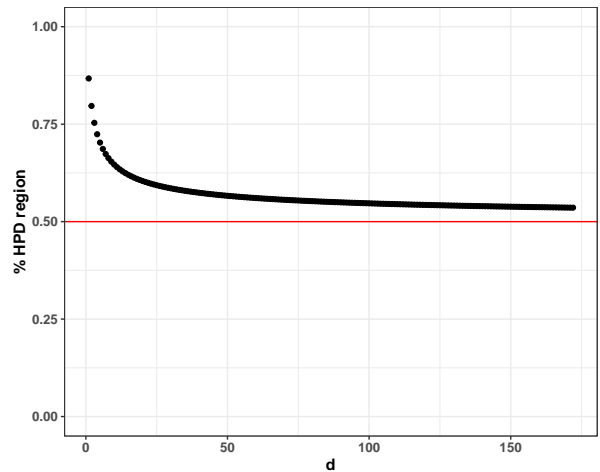}
\caption{\small{Approximate size of the HPD region using the optimal shifting parameter $L_d$. It seems to approach 50\%.\ Thus for a large dimension $d$ and given the assumption of posterior normality, choosing the optimal value for the radius is roughly equivalent to choosing the 50\% HPD region and setting all the terms of the THAMES outside that region to 0.}}\label{fig-thames_hpd_region} 
\end{figure}

\begin{figure}
\centering
\includegraphics[scale=.5]{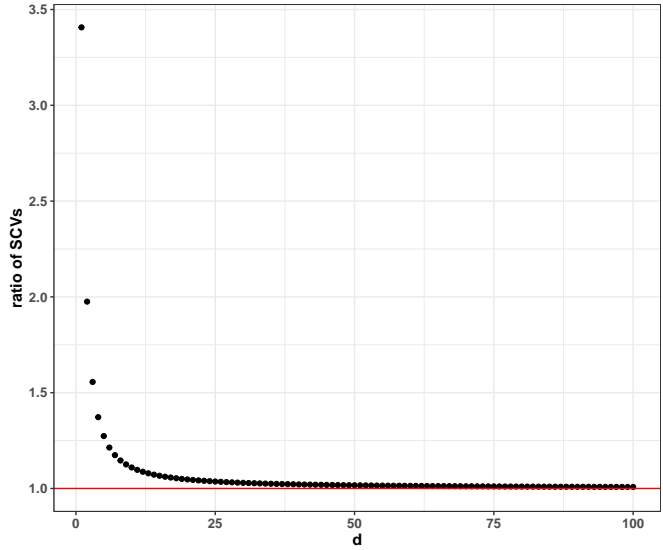}
\caption{\small{The ratio $SCV(d,\sqrt{(\chi^2)^{-1}(50\%,d)})/SCV(d,c_d)$ with varying $d$. It approaches 1 exponentially fast, so the error induced by using this heuristic decreases fast with the parameter dimension, $d$. }}\label{fig-thames_ratio_opt_vs_heuristic} 
\end{figure}

\begin{figure}%[width=230pt]
\centering
\includegraphics[scale=.6]{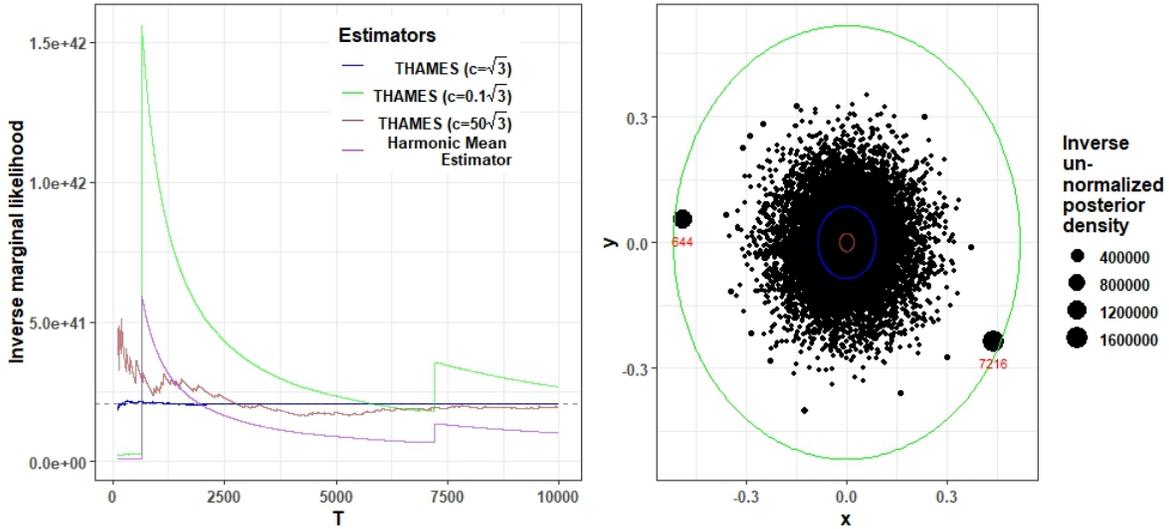} %\includegraphics[height=217.5pt]{A_2_dim.jpeg}
\caption{\small{Left: The THAMES calculated by choosing the radii $c=\sqrt{d+1}=\sqrt{3}, c=0.1\sqrt{3}, c=50\sqrt{3}$ (blue, brown, purple) in the two-dimensional case, $d=2$, with the true value $1/Z$ (black) and the Harmonic Mean Estimator (purple). Right: The posterior sample evaluated at the inverse of the unnormalized posterior density and the different ellipses  used to define the THAMES. In this particular case the posterior covariance matrix is the scaled identity matrix, so the ellipses are spheres. The two samples occuring at points 644 and 7216 have a very low likelihood. They cause massive jumps in the Harmonic Mean Estimator when the radius of $A$ is large (green) and are excluded when the radius is equal to $\sqrt{d+1}$ (blue). One can choose a smaller radius (brown), but then too much of the sample is excluded and convergence takes longer.} 
}\label{fig-thames_conv_comp}
\end{figure}

% \begin{figure}
% \centering
% \includegraphics[scale=.5]{A_2_dim.jpeg}
% \caption{\small{}}\label{fig-A_dim_2}
% \end{figure}\clearpage

\subsection{THAMES algorithm}\label{sect-algo}

Below is an algorithm for the implementation of THAMES. Procedures for sample splitting, as well as the truncated ellipsoid correction used in the case that the parameter space is constrained have been included. These additions are described in Section \ref{ssect-split} and Section \ref{ssec: bounded_param_problem}, respectively. 

We recommend these additions, but we have also found that in some cases they make almost no difference. For example, sample splitting does not appear to have an impact when the dimension of the parameter space, $d$, is small (Section \ref{ssec:multivariateGaussian} and Section \ref{ssec:bayesianRegression}), while the truncated ellipsoid correction is negligible when the posterior mean is not close to the edge of the posterior support (Section \ref{ssec:mixedeffect} and Section \ref{ssec:dirmultinom}).

\begin{algorithm}
\caption{$\hat{Z}^{-1}$ calculation}
\begin{algorithmic}
% \begin{itemize}
\item[]\textbf{Input:} Data $\mathcal{D}$ and posterior samples $(\theta^{(i)})_{i\in\llbracket 1,T\rrbracket}$.
% \newline
\item[]\textbf{Sample splitting:} Calculate the empirical mean $\hat\theta$ and sample covariance matrix $\hat\Sigma$ based on the first $T/2$ posterior samples $(\theta^{(i)})_{i\in\llbracket 1,T/2\rrbracket}$.
\item[]\textbf{Standardization:} $\tilde{\theta}^{(i)}=(\theta^{(i)}-\hat{\theta})\hat{\Sigma}^{-1/2}$ for $i\in\llbracket T/2+1,T\rrbracket$. 
% \newline
\item[]\textbf{Truncation subset:} $\mathcal{S} =\{i :  \|\tilde{\theta}^{(i)}\|_2^2 < d+1 \}$.
% \newline 
% \item[]\textbf{Output:} THAMES estimation
% \item[]\textbf{Unadjusted THAMES estimator:} 
\item[]\textbf{Calculate THAMES estimator:} 
\begin{equation*}
 %\hat{p}(\bar{D})^{-1} 
 \hat{Z}^{-1}= \frac{1}{T/2}\sum_{i=T/2+1}^{T} \frac{h(\theta^{(i)})}{L(\theta^{(i)})\pi(\theta^{(i)})},
\end{equation*}
\qquad where $h(\theta^{(i)})= 1/V(A)$ if $i\in \mathcal{S}$ and 0 otherwise, with \\
\qquad $V(A)=\sqrt{|\hat{\Sigma}|}\pi^{d/2}(d+1)^{d/2}/\Gamma(\frac{d}{2}+1) $ and $A = \{\theta : (\theta-\hat\theta)^T\hat\Sigma^{-1}(\theta-\hat\theta) < d+1\}$. 
% defined in Equation \eqref{eq-A}.
% \newline
% \item[]\textbf{Adjusted THAMES estimator:} 
\hspace{1cm} \If{the posterior support $\text{supp}(\theta|\mathcal{D})$ is constrained} 
    \State Simulate the sample $\nu^{(1)},\dots,\nu^{(N)}$ from the uniform distribution on $A$.
    \State Approximate the volume ratio $V(A\cap \text{supp}(\theta|\mathcal{D}))/V(A)$ via the Monte Carlo estimator
    \begin{align*}
        \hat{R}=\frac{1}{N}\sum^N_{i=1}\mathbbm{1}_{\{\theta|\pi(\theta)L(\theta)>0\}}(\nu^{(i)}).
    \end{align*}
    \State Assign $\hat{Z}^{-1} \leftarrow \hat{R}^{-1}\hat{Z}^{-1}$.%, with $\hat{R}$ defined in Equation \eqref{eq:rhat}. 
\hspace{1cm} \EndIf
% \newline
\item[]\textbf{Output:} THAMES estimator $\hat{Z}^{-1}$.
\end{algorithmic} 
\end{algorithm}

\subsubsection{Sample splitting}\label{ssect-split}

The theoretical guarantees established in Section \ref{ssec:copt} operate under the assumption of an oracle ellipsoid $A_{or}$. In particular, this means that the ellipsoid determining the THAMES estimator $\hat{Z}^{-1}$ is defined independently of the posterior samples $(\theta^{(i)})_{i\in\llbracket 1,T\rrbracket}$. In practice, we find that estimating $A$ and $Z^{-1}$ simultaneously using the same posterior sample can induce bias in $\hat{Z}^{-1}$ when the parameter space is high-dimensional. As such, we implement a sample splitting procedure that involves estimating $A$ and $Z^{-1}$ using separate posterior draws. Specifically, we first estimate the posterior mean and covariance matrix via the empirical mean $\hat\theta$ and sample covariance $\hat\Sigma$ using the first $T/2$ posterior samples $(\theta^{(i)})_{i\in\llbracket 1,T/2\rrbracket}$. Defining $A$ as in Equation \eqref{eq-A} based on $\hat\theta$ and $\hat\Sigma$, we then calculate the THAMES estimator $\hat{Z}^{-1}$ using the last $T/2$ posterior samples $(\theta^{(i)})_{i\in\llbracket T/2+1,T\rrbracket}$.

\subsubsection{Correcting for the presence of constrained parameters}\label{ssec: bounded_param_problem}

Whenever the posterior support of the parameters is not equal to $\mathbbm{R}^d$, for example when the parameters are variances or probabilities, it is possible that our choice of $h$ in Equation \eqref{eq-risest}, the pdf of the uniform distribution on $A$, is not correctly normalized. This is due to the fact that $A$ is not necessarily a subset of the posterior support and thus $h$ is not a pdf over this space. 

In this case, the expectation of the THAMES is distorted by a multiplicative constant:\begin{align}\label{eq-bounded_param_error}
    E_{\theta}[\hat{Z}^{-1}|\mathcal{D}]=E_{\theta}  \left[ \frac{h(\theta)}{L(\theta) \pi(\theta)}
 \bigg| \mathcal{D} \right]=Z^{-1}\cdot \frac{V(A\ \cap \ \text{supp}(\theta|\mathcal{D}))}{V(A)}=:Z^{-1}R ,
\end{align}
where $V(A\ \cap\  \text{supp}(\theta|\mathcal{D}))$ denotes the volume of the intersection between $A$ and the posterior support. One way to deal with this problem is to transform the parameter space, e.g., by setting $\vartheta:=\log(\theta)$ if $\theta$ is a variance parameter. One can then continue with marginal likelihood estimation on $\vartheta$, using the transformed prior distribution. In this case, it is of course important to include the Jacobian of the transformation when computing the prior density.

Another way is to adjust for the bias by calculating the ratio of these volumes, $R$, using a simple Monte Carlo approximation: We simulate $\nu^{(1)},\dots,\nu^{(N)}\stackrel{\text{i.i.d.}}\sim\text{Unif}(A),N\in\mathbb{N}$ and calculate 
\begin{align}
    \hat{R}:=\frac{1}{N}\sum^N_{i=1}\mathbbm{1}_{\text{supp}(\theta|\mathcal{D})}(\nu^{(i)})=\frac{1}{N}\sum^N_{i=1}\mathbbm{1}_{\{\theta|\pi(\theta)L(\theta)>0\}}(\nu^{(i)}).
\label{eq:rhat}
\end{align}
Given $A$, this is an unbiased and consistent estimator of $R$ by the law of large numbers. The bias-adjusted THAMES is then \begin{align}
    \hat{Z}^{-1}_{adj}=\hat{R}^{-1}\hat{Z}^{-1}.
\end{align}

The problem of the parameter space being constrained is common not only for the THAMES, but for reciprocal sampling estimators in general. It has for example been addressed by \cite{Hajargasht2018} and \cite{Sims2008}. \cite{Hajargasht2018} used variational Bayes techniques and showed that these ensure that the support of the chosen $h$ is a subset of the posterior support, under mild conditions. \cite{Sims2008} used an ellipsoidal density truncated on a subset of the joint support, $\Theta_U:=\{\theta\pi(\theta)L(\theta)>U\}$, where $U>0$. Since the support of $\pi(\theta)L(\theta)$ is equal to the posterior support, our truncation set is similar to the one chosen in \cite{Sims2008}, except that we set $U=0$.

The adjustment is usually very small: The problem arises only when the posterior mean is close enough to the edge of the parameter space. The edge of the parameter space often inherently indicates a priori unlikely values. For this reason it is also rare that the data indicates posterior parameters being close to the edge. Thus the ratio between the volumes is close to one and the variance of $\hat{R}$ is small. In fact, the adjustment did not have any sizeable impact on any of the examples simulated in Section \ref{sect-examples}. This may not be the case, however, if the actual data generating mechanism are very different from the model being considered.
In this case, it can in practice happen that the posterior mean is indeed very close to the edge. We show one example of this in Section \ref{ssec:dirmultinom}.

In either case we have found that a small number of simulations, around $N=100$,  is usually enough. Confidence intervals obtained from the fact that $\hat{R}$ is asymptotically normal can be used to check whether the variance of $\hat{R}$ is large. In this case $N$ should be increased to yield a more precise approximation.

\section{Examples}
\label{sect-examples}

 We now describe several simulated and real data examples to assess the THAMES estimator. In Sections \ref{ssec:multivariateGaussian}, \ref{ssec:bayesianRegression}, and \ref{ssec:dirmultinom}, three statistical models, for which exact expressions of the marginal likelihood are derived, are considered. This allows us to compare the THAMES estimated values to the exact ones for evaluation. In Section \ref{ssec:mixedeffect}, we consider a real data example with models for which no analytical expressions for the marginal likelihood are available and where there is a  need for reliable estimators. We compare our estimator to bridge sampling, which is more complicated than THAMES but is known to have perform well \citep{MengWong1996,gronau2020}.
 
\subsection{Multivariate Gaussian data}\label{ssec:multivariateGaussian}
We first consider the case where data $Y_i, i=1,\dots,n$ are drawn independently
from a multivariate normal distribution:
\begin{eqnarray*}
Y_i|\mu & \stackrel{\rm iid}{\sim} & {\rm MVN}_d(\mu,  I_d), \;\; i=1,\ldots, n,
\end{eqnarray*}
along with a prior distribution on the mean vector $\mu$:
\begin{equation*}
    p(\mu)= {\rm MVN}_d(\mu; 0_d, s_0 I_d),
\end{equation*}
with $s_0 > 0$. As shown in the Appendix, the posterior distribution of the mean vector $\mu$ given the data $\mathcal{D}=\{y_1, \dots, y_n\}$ is given by:
\begin{equation}\label{eq:postMultiGauss}
    p(\mu|\mathcal{D}) = {\rm MVN}_d(\mu; m_n,s_n I_d),
\end{equation}
where $m_n=n\bar{y}/(n+1/s_0)$, $\bar{y}=(1/n)\sum_{i=1}^n y_i$, and $s_n=1/(n +1/s_0)$. 

Interestingly, while the observations $(Y_i)_i$ are independent given the vector $\mu$, they are not independent marginally, and the marginal likelihood does not take a product form over marginal terms in $i$. Conversely, thanks to the isotropic Gaussian prior distribution which is considered for $\mu$, where the $(\mu_j)_j$ are all iid, not only are the vectors $(Y_{.j})_j$ independent given $\mu$, they are also independent marginally. From such a key property, we prove in Proposition \ref{appendix:multivariateGaussian} of the Appendix that the marginal likelihood of the model can be written analytically as:
\begin{equation}\label{eq:exactMultiGauss}
    p(\mathcal{D}) = \prod_{j=1}^d {\rm MVN}_n (y_{.j}; 0_n, s_0 1_n 1_n^{\intercal} +   I_n),
\end{equation}
where $y_{.j} \in \mathbb{R}^n$ is the vector of all observations for variable $j$ such that $[y_{.j}]_i = y_{ij}$ and $1_n$ is the vector of 1 in $\mathbb{R}^n$.

\subsubsection{Assessing the precision of the THAMES estimator as a function of $T$}

We first considered the univariate case $d=1$. Thus, we simulated a unique sample of 
size $n=20$ with $\mu=2$. Moreover, we set $s_0 = 1$, for illustration. Note that other choices for $s_0$ led to similar conclusions regarding the quality of the estimation. Figure \ref{fig-univgaussian} shows the THAMES estimated
values for the log marginal likelihood, for $T=5, 1005, 2005, \ldots, 9005$ samples of the posterior distribution (Equation \eqref{eq:postMultiGauss}). Confidence intervals as well as the exact value of the log marginal likelihood computed using Equation \eqref{eq:exactMultiGauss} are also reported. It can be seen that the estimate converges to the correct value and that the confidence intervals contain the true value in all cases, even for $T=5$ only.

\begin{figure}[t]
    \centering
    \includegraphics[scale=0.5]{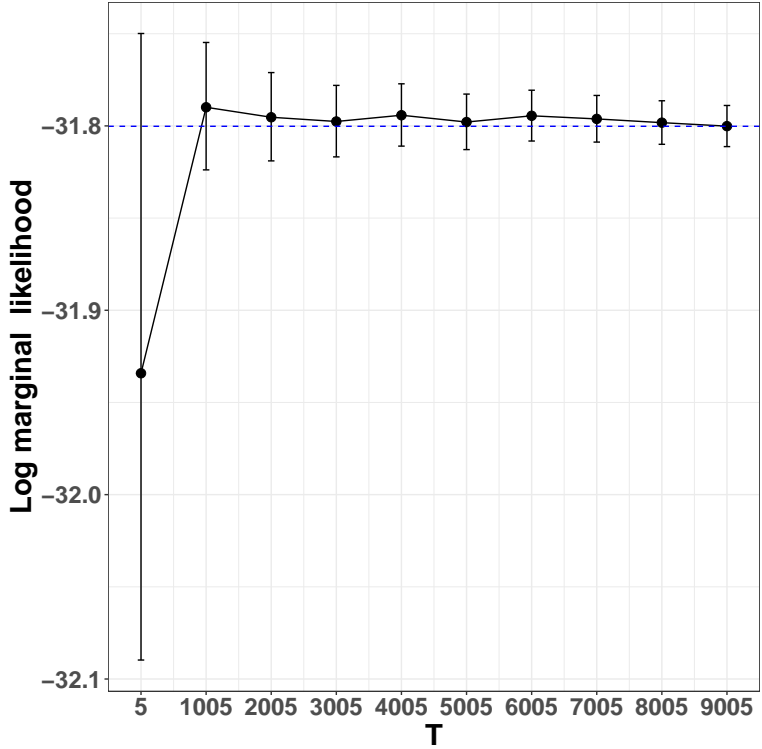}
    \caption{Estimation of the log marginal likelihood using THAMES for a unique univariate Gaussian sample with
$n=20$ as a function of $T$, the number of (cumulative) samples from the posterior distribution. The black dots indicate the values of the THAMES estimator of the log marginal likelihood. The vertical lines represent 95\% confidence intervals,
and the dashed blue line represents the exact value computed using Equation \eqref{eq:exactMultiGauss}.}
    \label{fig-univgaussian}
\end{figure}

\subsubsection{Assessing the precision of the THAMES estimator as a function of $d$}

For this second set of experiments, we considered different values of $d$, and aimed at testing the robustness of the THAMES approach on multiple data sets, with increasing dimensionality. Thus, for each $d$, we generated 50 different data sets of size $n=20$ using the multivariate Gaussian model. In practice, we set the true value of $\mu$ to $2$, for all its components. Again, the prior parameter $s_0$ was set to $1$ and similar conclusions were drawn for other values. Moreover, the value of $T$ was set to $10,000$ for all the experiments. 

We also used this example to assess  sample splitting procedure for the posterior samples, as proposed in Section \ref{ssect-split}. The results  are given in Figure \ref{fig-univgaussian}. On the figure on the left, where \emph{no}  sample splitting of the posterior samples is employed to compute THAMES, we observe that a bias appears as the dimensionality of the model considered increases, and the log marginal likelihood tends to be slightly underestimated. As illustrated by the figure on the right hand side, this bias is primarily related to the estimation of the posterior covariance matrix, and not to the THAMES estimation itself. Indeed, focusing on this figure on the right,  we note that if the exact expression  of the posterior covariance matrix given in Equation (\ref{eq:postMultiGauss}) is used to compute THAMES, then while the variance of the estimator increases with $d$, we do not observe any bias. Crucially, if the   sample splitting of the posterior samples is employed to compute THAMES, then again, we do not observe any bias. 

Overall, we found that the  sample splitting procedure of the posterior samples was not necessary to compute THAMES for low values of $d$. The estimated values are indeed particularly close to the exact ones. However, for large values of $d$, we recommend using the  sample splitting procedure to remove the bias.

\begin{figure}
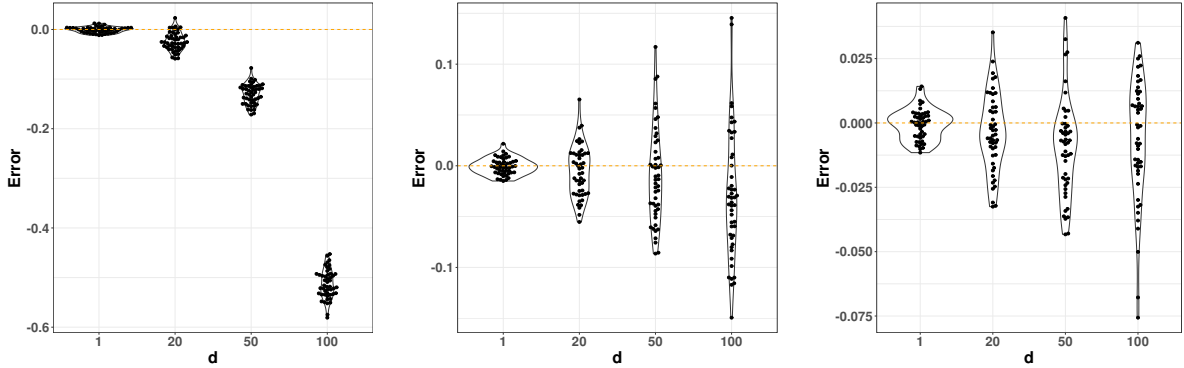

    \centering
   \begin{tabular}{ccc}
       \includegraphics[width=0.3\textwidth]{diff_marg_log_lik.pdf} & \includegraphics[width=0.3\textwidth]{diff_marg_log_lik_half_sig_10000.pdf}   & \includegraphics[width=0.3\textwidth]{diff_marg_log_lik_true_sig_10000.pdf}
    \end{tabular}
   
    \caption{Difference between the estimated log marginal likelihood using THAMES  and the true log marginal likelihood for a multivariate Gaussian model with
 $n=20$ and $T=10000$. The procedure is repeated on 50 different data sets for each $d$. Left: the THAMES approach \emph{with no}  sample splitting of the posterior samples. Middle: the THAMES approach \emph{with}  sample splitting of the posterior samples. Right: the results provided correspond to the case where the exact expression of the posterior covariance matrix given in Equation (\ref{eq:postMultiGauss}) is used to compute THAMES.}
    \label{fig-univgaussian}
\end{figure}

\subsection{Bayesian Regression}\label{ssec:bayesianRegression}

We consider a  data set $(x_i, Y_i), i=1,\dots,n$ to train a linear regression model of the form
\begin{equation*}
    Y_i | x_i, \beta \sim \mathcal{N}(x_i^{\intercal} \beta, \sigma^2), i=1,\dots, n.
\end{equation*}
Note that we assume the noise variance $\sigma^2$ to be known for the sake of the illustration.  Indeed, as the goal in this section is to assess the quality of the estimator we propose, an exact expression for the marginal likelihood $Z$ is needed, for comparison. If a prior distribution such as a gamma distribution for instance is chosen for $\sigma^2$, then, while the methodology we propose in this paper can be used directly for estimation, no analytical expression for $Z$ is not available, making the assessment infeasible.

Denoting $Y \in \mathbbm{R}^n$, the vector of target variables $Y_i$, and $X \in \mathcal{M}_{n \times d}(\mathbbm{R})$ the design matrix where the input vectors $x_i \in \mathbbm{R}^d$ are stacked as row vectors, the linear regression model becomes: 
\begin{equation*}
Y| X, \beta \sim {\rm MVN}_n(X\beta, \sigma^2 I_n).
\end{equation*}
We rely on a centered isotropic Gaussian prior distribution for the regression vector $\beta$:
\begin{equation*}
p(\beta) = {\rm MVN}_d(\beta; 0_d, I_d / \alpha),
\end{equation*}
with $\alpha > 0$. This framework was largely covered by the PhD thesis of D. MacKay through the development of the so called \emph{evidence} procedure \citep{mackay1992bayesian}.
In this context, the posterior distribution over $\beta$, given the training data set $\mathcal{D} = \{(x_1, y_1), \dots, (x_n, y_n)\}$ is tractable:
\begin{equation*}
    p(\beta| \mathcal{D}) = {\rm MVN}_d(\beta; m_n, \Sigma_n),
\end{equation*}
with
\begin{equation*}
    \Sigma_n^{-1} = \frac{X^{T}X}{\sigma^2} + \alpha I_d,
\end{equation*}
and
\begin{equation*}
    m_n = (X^{\intercal}X + \alpha \sigma^2 I_d)^{-1} X^{\intercal} \textbf{y},
\end{equation*}
where $\textbf{y} \in \mathbb{R}^n$ the \emph{observed} vector of target variables associated to $Y$. Moreover, the marginal likelihood also has an analytical expression:
\begin{equation*}
    p(\textbf{y}|X) = {\rm MVN}_n(\textbf{y}; 0_n, \frac{X X^{\intercal}}{\alpha} + \sigma^2 I_n).
\end{equation*}
The two corresponding proofs are given in Proposition \ref{prop:linreg} of the Appendix. 

The data for this example are described by \citet{Hastie&2009} and come from a study by \citet{Stamey&1989}. They examined the correlation between the level of prostate-specific antigen (lpsa) and eight clinical measures in men who were about to receive a radical prostatectomy. The variables are log cancer volume (lcavol), log prostate weight (lweight), age, log of the amount of benign prostatic hyperplasia (lbph), seminal vesicle invasion (svi), log of capsular penetration (lcp), Gleason score (gleason), and percent of Gleason scores 4 or 5 (pgg45). The target variable is the level of prostate-specific antigen (lpsa). Note that in our experiments, we set $\alpha=1/2$, but $\alpha$ could also be estimated with the evidence procedure. Other choices for $\alpha$ led to similar conclusions regarding the quality of the estimation. 

%Figure \ref{fig-prostate.nvar} shows the THAMES for the marginal log-likelihoods for eight different models, each with a different number of  selected variables, ranging from 2 to 8. {\sc [What can we conclude from this figure?] }

%\begin{figure}
%    \centering
%    \includegraphics[width=\textwidth]{marginal_lik_reg_prostate_diff_nb_var.pdf}
%    \caption{The marginal log-likelihood for different models in the prostate dataset for 5000 iterations. The dots represent the ???? and horizontal lines represent
%the ????. The correct values of the marginal likelihoods are shown by ????.}
 %   \label{fig-prostate.nvar}
%\end{figure}

Seven different regression models $M_2$, $M_3$, $\dots$, $M_8$, each with a different number of  selected variables, ranging from 2 to 8, are considered for illustration. The variables are added in the order given above. Thus, $M_2$ is made of variables lcavol and lweight, while Model $M_3$ considers the variables  lcavol, lweight, as well as age for prediction. Finally, model $M_8$ takes all input variables into account. Figure \ref{fig-prostate.marglik} shows the THAMES of the log marginal likelihood for different number of samples from the posterior distribution in $\beta$, for the different models, as well as the approximate confidence intervals.  We did not use sample splitting to estimate $A$ in this example, as the dimension of the parameter vector was relatively low. 

There was no noticeable bias in the results. Indeed, it can be seen that the estimators converge
rapidly to the correct value and that the intervals cover the correct values
in most cases, even when the number of samples used is small, for all models investigated. For all models, the THAMES estimator is particularly accurate for 1000 samples of the posterior only. While the main goal of this section is to illustrate the precision of our estimation strategy for a series of models, we can also report that the model with the highest marginal likelihood, for this data set, is Model $M_2$. In other words, the variables lcavol and lweight are seen as key for the prediction of the level of prostate-specific antigen.

\begin{figure}[htbp!]
    \centering
    \includegraphics[width=0.8\textwidth]{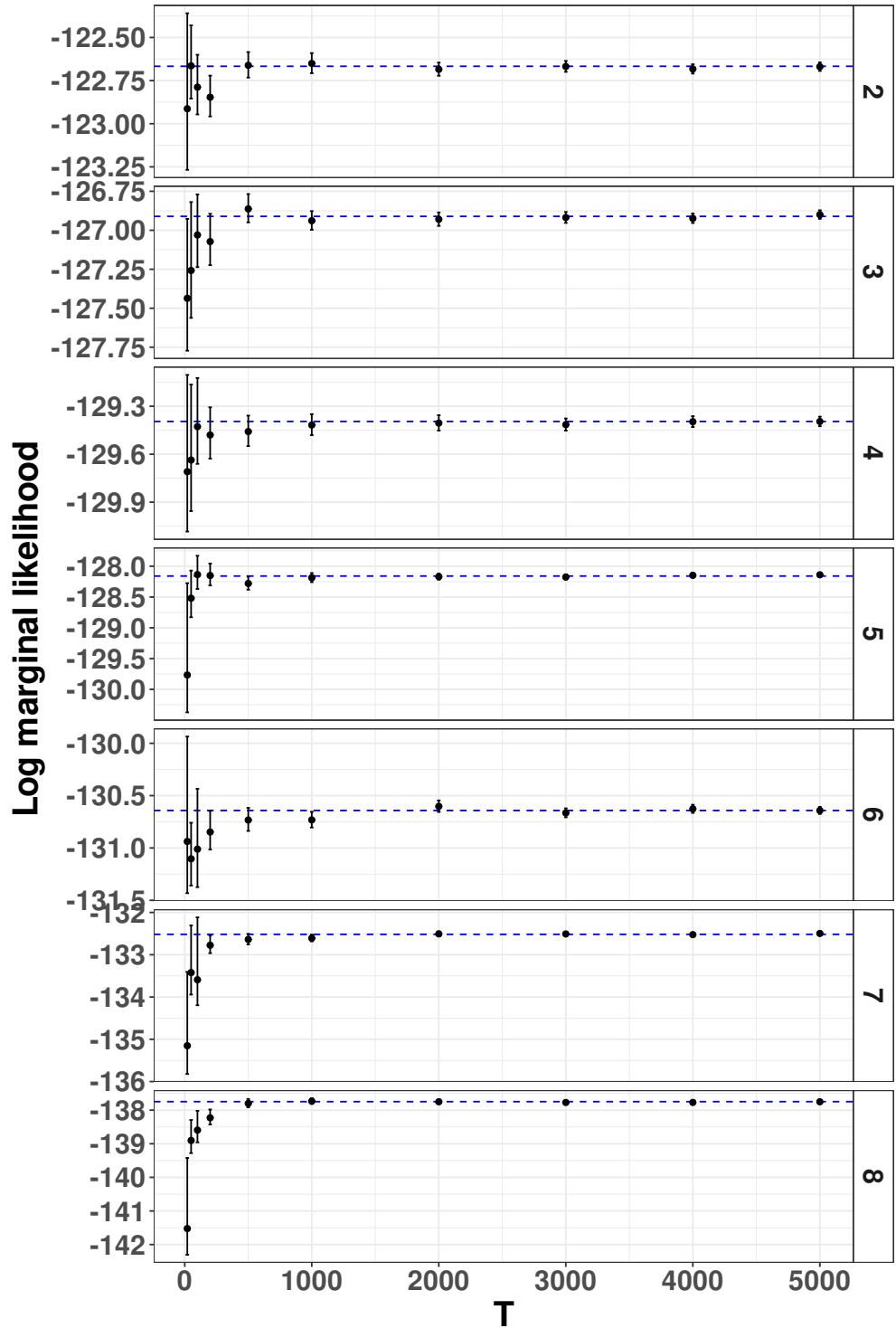}
    \caption{The black dots indicate the values of the THAMES estimator, for different models and number of samples from the posterior distribution, in the prostate dataset. The approximate confidence intervals of the estimator are also indicated in black. The exact values of the log marginal likelihood are shown by the dashed blue lines. }
    \label{fig-prostate.marglik}
\end{figure}

\subsection{Dirichlet-multinomial model}\label{ssec:dirmultinom}

Extensions of the Dirichlet-multinomial model are widely used in the context of topic modelling (see, e.g., \cite{Blei2003LDA}) The expression for the marginal likelihood in this model is known, as in the previous two sections. This allows us to assess the performance of our estimator in another simulation study, in a non-Gaussian context. 

A simulation study in this setting is useful for two reasons: First, this is a high-dimensional setting in which the posterior distribution of the parameters is highly non-Gaussian. In fact the parameter space is bounded. This allows us to assess how well the THAMES performs in a very different setting and also lets us assess how much of an impact the correction for a bounded parameter space from Section \ref{ssec: bounded_param_problem} has.

Second, there do exist similar models to this one for which the marginal likelihood is not tractable,  e.g. \citet{Blei2007CTM}). These models are therefore a possible application of the THAMES. The simulation study might give an idea of how well the THAMES would perform in these applications.

The Dirichlet-multinomial model is defined as follows: Each data point $Y_i\in\{0,\dots,l\}^K$ is drawn from a multinomial distribution given a Dirichlet-distributed random variable $\mu$:\begin{align*}
\mu&\sim \text{Dirichlet}(\mu;(a_0,\dots, a_0)), \\ Y_i|\mu&\stackrel{\text{i.i.d.}}\sim\mathcal{M}(l,\mu),\quad \forall i=1,\dots, n . \end{align*} Here, $\mu$ is positive and $K$-dimensional with components summing to 1. The covariance matrix of $\mu$ is thus necessarily singular. As noted in Remark \ref{rem-sigma_mustbe_pos_def}, the THAMES needs to be used on posterior simulations from the subspace of $\mathbb{R}^K$ on which a density is defined. In this case, this is $\mathbb{R}^{K-1}=:\mathbb{R}^d$. The prior density is thus \begin{align*}
    \pi(\mu_1,\dots,\mu_d)=\text{Dirichlet}\left(\mu_1,\dots,\mu_d,1-\sum^d_{j=1}\mu_j;(a_0,\dots,a_0)\right).
\end{align*} The posterior support is the $d$-dimensional simplex \begin{align*}
    \left\{\mu\in\mathbbm{R}^d\ |\ \sum^d_{j=1}\mu_j\leq1,\mu_1,\dots,\mu_d>0\right\}.
\end{align*}The posterior distrib ution given the data $\mathcal{D}=\{y_1,\dots,y_n\}$ is tractable:\begin{align*}
    p(\mu_1,\dots,\mu_d|\mathcal{D})&=\text{Dirichlet}\left(\mu_1,\dots,\mu_d,1-\sum^d_{j=1}\mu_j;\alpha_1,\dots,\alpha_K\right) ,\\ \alpha_j&=a_0 + \sum_{i=1}^ny_{ij} .
\end{align*}
The marginal likelihood is thus also tractable, using Bayes's theorem.

\subsubsection{Results}

The marginal likelihood was estimated in the setting $(n,l,T,a_0)=(400,150,10000,1)$ with $d$ varying between 1,20, 50 and 100. The quantities $n$ and $l$ were intentionally chosen to be large, since this model has very high dimensional applications. For example, \cite{Blei2003LDA} used a data set with 8000 documents, $n=15,818$ words and used up to $K=200$ different topics. 

We considered two different approaches for the simulation procedure. In the first, a fixed value of $\mu$ was determined using the Dirichlet distribution with parameters $(a_0,\dots,a_0)$ for each value of $d$. For each $\mu$, 50 data sets were simulated using the multinomial distribution with parameters $l$ and $\mu$. The second approach was to not simulate $\mu$, but fix it to be the probability vector of the uniform distribution, i.e. $\mu=(1/K,\dots,1/K)$ and repeat the same simulation procedure just described. The results of both of these approaches are shown in Figure \ref{fig-simple_multinomial_3_plots}. The posterior sample was split: the first $T/2$ parameter values simulated from the posterior were used  to estimate the covariance matrix and mean, while the remaining values were  used to calculate the THAMES.

Overall, the THAMES performed well in all the experiments, and the impact of the bounded parameter correction was small in this case. The performance of the THAMES in the second setting, with fixed $\mu$, was noticeably better than its performance in the first, with stochastic $\mu$. In fact, the third plot in Figure \ref{fig-simple_multinomial_3_plots} is quite similar to the third plot in Figure \ref{fig-univgaussian}, which shows the Gaussian case. Presumably this can be explained by Theorem \ref{thm-n_limit_approx}: it seems that the size of the data set is large enough that the posterior is well approximated by a multivariate Gaussian, for which the parameter of the THAMES is optimal.

The results from the approach in which $\mu$ was simulated from the Dirichlet (the left plot in Figure \ref{fig-simple_multinomial_3_plots}) are less symmetric and show larger errors on average than the results obtained in which $\mu$  was fixed to be the probability vector of the discrete uniform. This is probably because probability vectors simulated from the Dirichlet are more skewed, with some of their elements being close to the edge of the parameter space. For this reason convergence of the posterior to the multivariate Gaussian presumably takes longer. The results are still similar to those the baseline setting in which $\mu$ was fixed: the errors increase with the number of dimensions, roughly at rate $\sqrt{d}$, as predicted by Theorem \ref{thm-limit_behav_SCV}. 

As mentioned, probability vectors simulated from the Dirichlet can often have elements close to the edge of the parameter space. The bias correction from Section \ref{ssec: bounded_param_problem} was thus used in the first approach, although this correction only had a small impact: The values of the log marginal likelihood estimation where changed by at most 0.007 (plot 2 in Figure \ref{fig-simple_multinomial_3_plots}). The Bias-correction was 0 each time for the $``$base-line$"$ approach. This makes sense, as none of the elements of the probability vector of the discrete uniform are close to the boundary of the simplex. 

We have simulated the THAMES in a high dimensional, non-Gaussian setting. Still, the results obtained showed similarity to those obtained in another setting in which the posterior was exactly Gaussian. This shows a real strength of the THAMES: While a large sample size hugely increases the size of the marginal likelihood and could thus be thought to increase the difficulty of the estimation procedure, it has in our experience \textbf{improved} the performance of the THAMES. The THAMES could thus be a candidate for very high dimensional estimation procedures in which both the dimension of the parameter space and the sample size are large.

\begin{figure}[t]
    \centering
    \includegraphics[scale=.5]{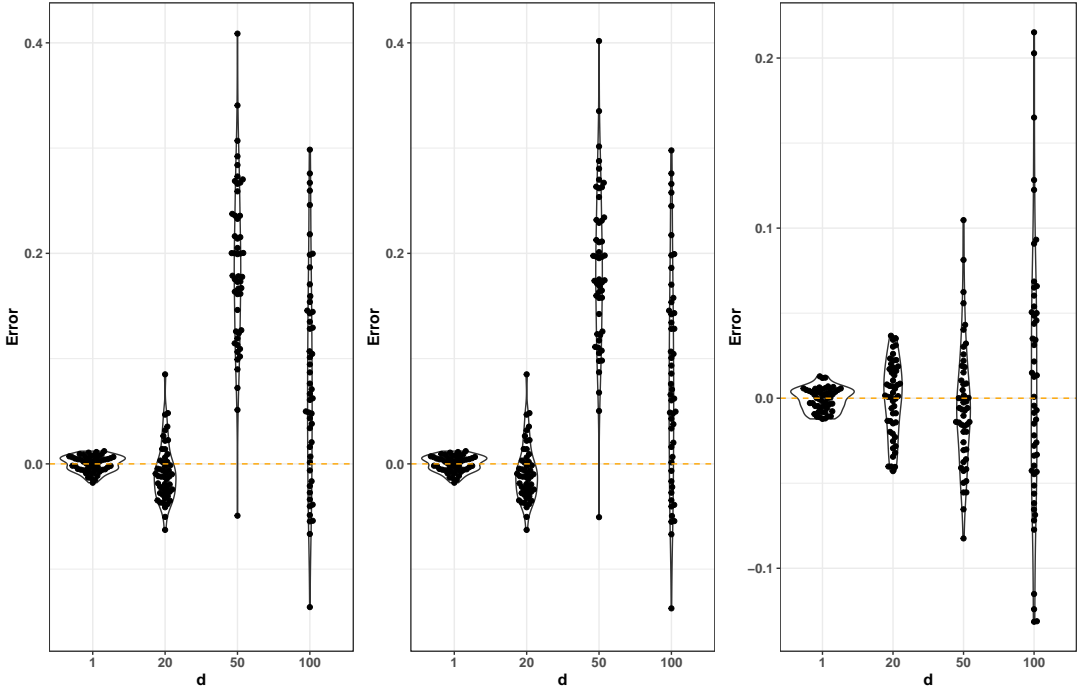}
    \caption{Difference between the estimated log marginal likelihood using THAMES and the true log marginal likelihood for a Dirichlet-multinomial model with
$(n,l,T,a_0)=(400,150,10000,1)$. The procedure is repeated on 50 different data sets for each $d$.  Left: The results when the data sets were simulated using $\mu$ from the Dirichlet-multinomial model, without the bounded parameter space bias correction from Section \ref{ssec: bounded_param_problem}. Center: The same results, but with the bounded parameter space correction. It has a small impact: Some points from the data sets simulated for $d=50$ are shifted by about $0.007$. Right: The model when fixing $\mu$ to be the probability vector of the discrete uniform. It shows better results, presumably because the probability vector of the discrete uniform is less skewed than probability vectors simulated using the Dirichlet.}\label{fig-simple_multinomial_3_plots}
\end{figure}

\subsection{Mixed effects model}\label{ssec:mixedeffect}

\subsubsection{Netherlands schools data}

To demonstrate the performance of THAMES on a random effects model, we consider the Netherlands (NL) schools dataset of \citet{snijders1999}. For our purposes, the data consist of language test scores of 2,287 eighth-grade pupils from 133 classes (in 131 schools) in the Netherlands. We denote by $y_{ij}\in\mathbb{R}$ the language test score of pupil $i$ in class $j$, where $j\in\{1,\ldots,J\}$ with $J=133$ and $i\in\{1,\ldots,n_j\}$ with $n_j$ the size of class $j$. Let $n=\sum_{j=1}^J n_j=2,287$ denote the full sample size. 

We aim to determine if there is clustering of language test scores by class, with some classes performing significantly better than others on average. To do this, we fit both a simple mean model (which treats test scores of students in the same class as independent) and a random intercept model (which accounts for correlation of test scores within each class) to the data. The former (null) model $H_0$ posits that all classes perform the same, on average, while the latter (alternative) model $H_1$ allows for variation in performance at the class level. We estimate the log marginal likelihoods for the two models,  $\ell_0(y)$ and $\ell_1(y)$, respectively, using THAMES and bridge sampling \citep{gronau2020} for comparison. With estimates of $\ell_0(y)$ and $\ell_1(y)$, we estimate the log Bayes factor $\text{b}_{01}$ to conduct a Bayesian hypothesis test of $H_0$ versus $H_1$. Note that posterior simulation and marginal likelihood calculation are not analytically tractable for this model. As such, the use of approximate posterior sampling (e.g., via MCMC) and marginal likelihood estimation (e.g., via THAMES) is required.

\subsubsection{Linear model (LM)}

We first consider a simple mean model (denoted LM), which posits that
\begin{align*}
y_{ij} &= \mu + \varepsilon_{ij}, \quad j\in\{1,\ldots,J\}, i\in\{1,\ldots,n_j\}, \sum_j n_j=n, \\
\varepsilon_{ij} & \stackrel{\rm iid}{\sim}  N(0,\sigma_{\varepsilon}^2),  \\ \quad
\mu &\sim N(\hat\mu,\hat\sigma_{\mu}^2), \\
\sigma_{\varepsilon}^2 &\sim\text{InverseGamma}(\hat\nu_{\varepsilon},\hat\beta_{\varepsilon}).
\end{align*}
The fixed hyperparameters $\hat\mu,\hat\sigma_{\mu}^2,\hat\nu_{\varepsilon},\hat\beta_{\varepsilon}$ are specified so as to ensure that the prior distribution is dispersed relative to the likelihood, but on the same scale, as 
\begin{align*}
\hat\mu &= \text{mean}(y_{ij}) = 40.93, \\
\hat\sigma_{\varepsilon}^2 &= \sqrt{2}\cdot\text{sd}(y_{ij}) = 12.73,\\
\hat\nu_{\varepsilon} &= 0.5, \\
\hat\beta_{\varepsilon} &= 0.5\cdot\text{var}(y_{ij}) = 40.53.
\end{align*}
The hyperparameters $(\hat\nu_{\varepsilon},\hat\beta_{\varepsilon})$ are chosen so that the prior mean of the precision $1/\sigma_{\varepsilon}^2$ equals $1/\text{var}(y_{ij})$.
The set of parameters to be estimated in this model $(\mu,\sigma_{\varepsilon}^2)$ has dimension $d=2$.

% \subsubsection{Linear model (analytic)}

% We also consider a variation of the mean model above (denoted analytic LM), which admits a simple analytic formula for the marginal likelihood. The model is given by
% \begin{align*}
% y_{ij} &= \mu + \varepsilon_{ij}, \\
% \varepsilon_{ij} &\sim N(0,\sigma_{\varepsilon}^2)\quad\text{iid}, \\
% \mu &\sim N(0,\sigma_{\varepsilon}^2), \\
% \sigma_{\varepsilon}^2 &\sim\text{InverseGamma}(1.5,0.5).
% \end{align*}
% In this case, the log marginal likelihood is given by
% \begin{align*}
% \ell(y) &= \log\left(\Gamma\left(\frac{n+3}{2}\right)\right) - \log\left(\Gamma\left(\frac{3}{2}\right)\right)-\frac{1}{2}\log(n+1)- \frac{n}{2}\log(\pi) \\
% &\qquad- \frac{n+3}{2}\log\left(1+\sum_{ij}y_{ij}^2-\frac{1}{n+1}\left(\sum_{i,j}y_{ij}\right)^2\right).
% \end{align*}
% The set of parameters to be estimated in this model $(\mu,\sigma_{\varepsilon}^2)$ has dimension $d=2$.

\subsubsection{Full linear mixed model (full LMM)}

We consider the random intercept model (denoted full LMM):
\begin{align*}
y_{ij} &= \mu + \alpha_j+\varepsilon_{ij}, \\
\varepsilon_{ij} & \stackrel{\rm iid}{\sim} N(0,\sigma_{\varepsilon}^2), \\
\alpha_j & \stackrel{\rm iid}{\sim} N(0,\sigma_{\alpha}^2), \\
\mu &\sim N(\hat\mu,\hat\sigma_{\mu}^2), \\
\sigma_{\varepsilon}^2 &\sim\text{InverseGamma}(\hat\nu_{\varepsilon},\hat\beta_{\varepsilon}), \\
\sigma_{\alpha}^2 &\sim\text{InverseGamma}(\hat\nu_{\alpha},\hat\beta_{\alpha}).
\end{align*}
Here $(\hat\mu,\hat\sigma_{\mu}^2,\hat\nu_{\varepsilon},\hat\beta_{\varepsilon})$ are as above and we specify
\begin{align*}
\hat\nu_{\alpha} &= 0.5, \\
\hat\beta_{\alpha} &= 0.5\cdot\text{var}(\hat\mu_j) = 13.77,
\end{align*}
where $\hat\mu_j = \frac{1}{n_j}\sum_{i=1}^{n_j}y_{ij}$ is the sample mean for class $j\in\{1,\ldots,J\}$. The hyperparameters $(\hat\nu_{\alpha},\hat\beta_{\alpha})$ are chosen so that the prior mean of the precision $1/\sigma_{\alpha}^2$ equals $1/\text{var}(\hat\mu_j)$.
The set of parameters to be estimated in this model $(\mu,\sigma_{\varepsilon}^2,\sigma_{\alpha}^2,\alpha)$ has dimension $d=136$.

\subsubsection{Reduced linear mixed model (reduced LMM)}

Note that the intercept parameters of the full LMM are not identifiable, as there is give-and-take in estimating the grand mean $\mu$ and the random intercepts $\alpha_j$. By absorbing $\alpha_j$ into the error term structure $\varepsilon_{ij}$, we can specify an equivalent model (having the same marginal likelihood) with $d=3$ identifiable parameters $(\mu,\sigma_{\varepsilon}^2,\sigma_{\alpha}^2)$. Mathematically, this amounts to 
marginalizing 
% integrating
$\alpha_j$ out of the model.
The model (denoted reduced LMM) is given by
\begin{align*}
y_{ij} &= \mu+\varepsilon_{ij}, \\
\varepsilon_{ij} &\sim N(0,\sigma_{\varepsilon}^2+\sigma_{\alpha}^2), \\
\text{Cov}(\varepsilon_{ij},\varepsilon_{i'j}) &= \sigma_\alpha^2, \quad i,i'\in\{1,\ldots,n_j\}, i\neq i', \\
\text{Cov}(\varepsilon_{ij},\varepsilon_{i'j'}) &= 0, \quad j\neq j', \\
\mu &\sim N(\hat\mu,\hat\sigma_{\mu}^2), \\
\sigma_{\varepsilon}^2 &\sim\text{InverseGamma}(\hat\nu_{\varepsilon},\hat\beta_{\varepsilon}), \\
\sigma_{\alpha}^2 &\sim\text{InverseGamma}(\hat\nu_{\alpha},\hat\beta_{\alpha}).
\end{align*}
Here $(\hat\mu,\hat\sigma_{\mu}^2,\hat\nu_{\varepsilon},\hat\beta_{\varepsilon},\hat\nu_{\alpha},\hat\beta_{\alpha})$ are as above. 

\subsubsection{Results}

Figure \ref{nlschools} 
shows the log marginal likelihood of the NL schools data for each model as estimated by THAMES 
% and bridge sampling \citep{gronau2020}, 
with approximate 95\% confidence intervals, as a function of the number of posterior MCMC draws. In the left panel, the bridge sampling estimate of the log marginal likelihood for the reduced LMM based on 20,000 posterior samples is plotted in black \citep{gronau2020}. In the right panel, the bridge sampling estimate for the LM with 20,000 posterior samples is shown. Bridge sampling is a popular state-of-the-art method to estimate log marginal likelihoods from posterior MCMC samples, which is substantially more complicated computationally than the THAMES. MCMC sampling is carried out in R using Stan \citep{r2023,stan2022}. We use values of the posterior sample size $T$ evenly spaced between 2,000 and 20,000. For each $T$, we run 4 chains in parallel for $T/2$ iterations and burn the first $T/4$, yielding $T/4$ MCMC samples from each of the 4 chains.

% Aside from the full LMM 
% (Figure \ref{lmm-full}), 
THAMES provides unbiased estimates of the log marginal likelihood with greater precision as the posterior sample size grows. 
% Bridge sampling tends to converge on the true value more quickly and with greater precision as a function of sample size.
As we would expect, THAMES converges much faster for the LM (with $d=2$) and the reduced LMM (with $d=3$) as compared to the full LMM (with $d=136$), although the estimates of the reduced and full LMM converge to the same value. 
While the posterior support of this model is constrained due to the variance parameters $(\sigma^2_\varepsilon,\sigma^2_\alpha)$, we found that the truncation correction defined in Section \ref{ssec: bounded_param_problem} had no impact on the results.
For a given posterior sample size, we find that bridge sampling generally produces more precise estimates than THAMES. However, THAMES has the advantage of being much simpler to implement in practice.

Using the THAMES estimates of the log marginal likelihoods for the LM ($\ell_0(y)$) and the (reduced) LMM ($\ell_1(y)$) with 20,000 posterior draws, the log Bayes factor ($\text{b}_{01}$) is estimated as
\[
\text{b}_{01} = \ell_0(y)-\ell_1(y) = -8278.842 + 8136.561 = -142.281,
\]
which indicates decisive evidence in favor of the random intercept model \citep{kass1995bayes}.

\begin{figure}
    \centering
    \includegraphics[width=\textwidth]
    % {thames_nlschools_final.pdf}
    {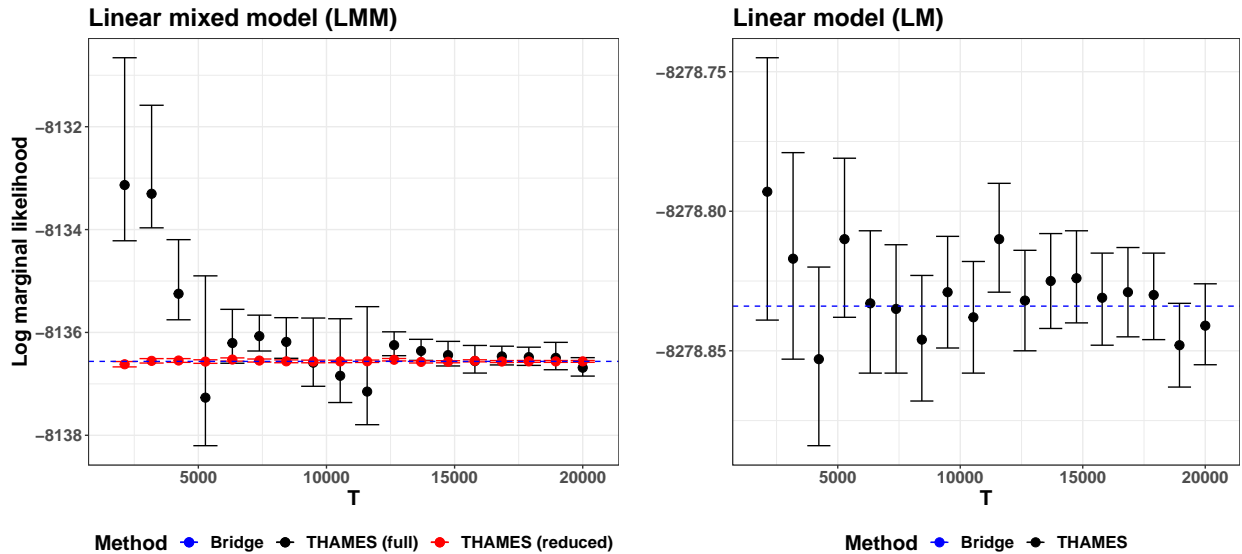}
    \caption{THAMES log marginal likelihood estimates for linear and mixed models fitted to the NL schools data.}
    \label{nlschools}
\end{figure}

\section{Discussion}
\label{sect-discussion}
We have proposed an estimator of the reciprocal of the marginal likelihood, 
called THAMES,
which is simple to compute, unbiased, consistent, has finite variance
and is asymptotically normal, with available confidence intervals.
It is a version of reciprocal importance sampling.
The estimator has one user-specified control parameter, and we
have derived an optimal value for this in the situation where the 
posterior distribution is normal, which is of great interest because
posterior distributions are asymptotically normal in many situations.
We have carried out several numerical experiments in which the estimator
performs well.

The THAMES relies on estimating the posterior covariance matrix and mean. In our experience it is important that the estimator chosen for the covariance matrix be accurate for estimating each matrix entry. Elementwise accuracy appears to be important because the covariance matrix is used to precisely define a quadratic inequality. For example, using a Shrinkage estimator for the covariance matrix, which can produce large errors in a small proportion of its elements, has in our experience degraded the performance of the THAMES in some situations.

One possible alternative to covariance matrix estimation would be to select a minimum-volume covering ellipse which includes a certain percentage of those points of the posterior sample which have the largest value with respect to the (unnormalized) posterior density evaluated at those points. This would ensure that an HPD-region is well approximated, independent of the underlying posterior distribution. Determining a minimum-volume covering ellipse given a set of points can be difficult computationally, but this problem has been  addressed by a vast amount of literature in many different settings and could possibly be adapted to the THAMES. 

\paragraph{Acknowledgements:} 
Irons's research was supported by a Shanahan Endowment Fellowship and a Eunice Kennedy Shriver National Institute of Child Health and Human Development training grant, T32 HD101442-01, to the Center for Studies in Demography \& Ecology at the University of Washington.
Raftery's research was supported by NIH grant
R01 HD070936 from the Eunice Kennedy Shriver National
Institute of Child Health and Human Development (NICHD), 
by the Fondation des Sciences Math\'{e}matiques de Paris (FSMP),
and by Universit\'{e} Paris-Cit\'{e}. 

\bibliographystyle{chicago}
\bibliography{thames}

\hphantom{\cite{NIST}} %I had to include this workaround because I had some problems with including this citation

\appendix
\section*{Appendix 1: Proofs of Theorems \ref{thm-limit_behav_c_d}, \ref{thm-limit_behav_SCV} and \ref{thm-n_limit_approx}}

The first two theorems will be proven using two lemmas and one proposition. Theorem \ref{thm-n_limit_approx} is proven using Theorem \ref{thm-limit_behav_c_d}, while Theorem \ref{thm-limit_behav_c_d} is proven using Theorem \ref{thm-limit_behav_SCV} and Lemma \ref{lem-anal_sol_f}. Theorem \ref{thm-limit_behav_SCV} is proven using Proposition \ref{prop-exact_bounds_SCV}, while Proposition \ref{prop-exact_bounds_SCV} is proven using Lemma \ref{lem-anal_sol_f} and Lemma \ref{lem-simple_sol_f}.

Lemma \ref{lem-anal_sol_f} gives the first basic results and an explicit expression of the $SCV$. Lemma \ref{lem-simple_sol_f} can be used to simplify this expression and give an exact upper and an exact lower bound on the $SCV$. This is shown in Proposition \ref{prop-exact_bounds_SCV}. These exact bounds are simplified to their limiting solutions to prove Theorem \ref{thm-limit_behav_SCV}. It is further shown that these bounds lead to a contradiction if the limit of $c_d/d$ is not 1, proving Theorem \ref{thm-limit_behav_c_d}. Finally, the fact that $c_d$ is unique (Theorem \ref{thm-limit_behav_c_d}) is used to prove Theorem \ref{thm-n_limit_approx}.

\begin{lemma}\label{lem-anal_sol_f} Let $m$ denote the mean and $\Sigma$ the covariance matrix of the posterior distribution, which is multivariate normal. There exists a sphere $S$ centered in $0$ with radius $c$ such that $A_{or}=\sqrt{\Sigma}\cdot S+m$ minimizes the variance of the THAMES, i.e.\ any other measurable set $C$ results in a larger or equal variance. Choosing the set $A_{or}$ for the THAMES gives the corresponding $SCV$ \begin{align}
    SCV(d,c)=\underbrace{\frac{d\cdot 2^{d/2}}{1/\Gamma\left({\frac {d}{2}}+1\right)} }_{=:\kappa_d}\cdot\frac{1}{c^{d+2}}\cdot\underbrace{\frac{1}{c^{d-2}} \int_0^c \exp\left(\frac{r^2}{2}\right)r^{d-1}dr}_{=:f(d,c)}-1.
\end{align} It has a unique minimal point $c_d\geq \sqrt{d}$ and its first derivative with respect to $c$ is \begin{align}
    \frac{\partial SCV(d,c)}{\partial c}=\frac{-2d\kappa_d}{c^{d+1}}\frac{1}{c^2}f(d,c)+\frac{\kappa_d}{c^{d+1}}\exp\left(\frac{c^2}{2}\right),
\end{align} where an explicit expression of $f$ is given by \begin{align}\label{equ-analyt_sol_f}
    f(d,c)=(d-2)!!\left(-\frac{1}{c^2}\right)^{\left\lceil\frac{d-2}{2}\right\rceil}\left(f_{d,c}+\sum^{\left\lceil\frac{d-2}{2}\right\rceil}_{r=1}\frac{\exp\left(\frac{c^2}{2}\right)\left(-\frac{1}{c^2}\right)^{-r}}{\left(2r+2\left\lfloor \frac{d}{2}\right\rfloor-d\right)!!}\right),
\end{align} with the double factorial denoting \begin{align*}
    k!!:=\begin{cases}k\cdot(k-2)\cdots 2&k\text{ even,}\\k\cdot(k-2)\cdots3&k\text{ odd,}\\0&k\leq0,\end{cases}
\end{align*} for any $k\in\mathbbm{Z}$ and \begin{align*}f_{d,c}:=\begin{cases}
    \frac{1}{c}\sqrt{\frac{\pi}{2}}\textup{erfi}\left(\frac{c}{\sqrt{2}}\right)& d\text{ odd,}\\ \exp(\frac{c^2}{2})-1& d\text{ even.}\end{cases}\end{align*}\end{lemma}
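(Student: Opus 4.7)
The plan is to reduce the variance minimization to a radial isotropic problem, identify the minimizer among all measurable sets via a bathtub argument, compute the radial integrals explicitly, and finally unfold a recursion for $f(d,c)$.

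Using posterior normality, write $L(\theta)\pi(\theta)=Z\,p(\theta|\mathcal{D})$ with $p(\theta|\mathcal{D})$ a Gaussian density. For any measurable candidate set $C$ with $h=\mathbbm{1}_C/V(C)$, the expectation $E[\mathbbm{1}_C(\theta)/(V(C)L(\theta)\pi(\theta))]$ equals $1/Z$ independently of $C$, and a direct computation gives
\begin{equation*}
SCV(C)+1 \;=\; \frac{(2\pi)^{d/2}|\Sigma|^{1/2}}{V(C)^2}\int_C \exp\!\left(\tfrac{1}{2}(\theta-m)^T\Sigma^{-1}(\theta-m)\right)\,d\theta .
\end{equation*}
The affine change of variables $u=\Sigma^{-1/2}(\theta-m)$ turns the weight into the strictly radially increasing $g(u)=e^{\|u\|^2/2}$ and cancels the $|\Sigma|$ factors, so the problem collapses to minimizing $J(C'):=V(C')^{-2}\int_{C'}g(u)\,du$ over measurable $C'\subset\mathbbm{R}^d$. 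For fixed volume $V(C')=V$, the bathtub (layer-cake) principle says $\int_{C'}g$ is minimized by the sub-level set $\{g<t\}$ of volume $V$; since $g$ is radial and increasing in $\|u\|$, this sub-level set is an origin-centered ball. Hence the minimizer has the form $A_{or}=\sqrt{\Sigma}\cdot S+m$ with $S$ a ball.

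Restricting to balls and passing to spherical coordinates gives $\int_{B_c}g(u)\,du=\frac{2\pi^{d/2}}{\Gamma(d/2)}\int_0^c e^{r^2/2}r^{d-1}dr$; substituting $V(B_c)$ from Equation \eqref{eq-V} and using $\Gamma(d/2+1)=(d/2)\Gamma(d/2)$ collapses the constants into $\kappa_d$ and yields $SCV(d,c)=\kappa_d c^{-(d+2)}f(d,c)-1$. The derivative formula then follows from term-by-term differentiation, using the fundamental theorem of calculus on $c^{d-2}f(d,c)=\int_0^c e^{r^2/2}r^{d-1}dr$. To pin down the minimizer I set $G(c):=e^{c^2/2}-\frac{2d}{c^d}\int_0^c e^{r^2/2}r^{d-1}dr$, so critical points of $SCV$ coincide with zeros of $G$; a short calculation gives $G'(c)=e^{c^2/2}(c-d/c)-\frac{d}{c}G(c)$, hence at any zero $G'(c)=e^{c^2/2}(c-d/c)$. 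Together with $G(0^+)=-1$ (from $\int_0^c e^{r^2/2}r^{d-1}dr\sim c^d/d$) and $G(c)\to\infty$, the first zero must satisfy $G'\geq 0$, so $c\geq\sqrt{d}$; a second zero would require $G'\leq 0$ and hence $c\leq\sqrt{d}$, contradicting its being above the first. This gives uniqueness and $c_d\geq\sqrt{d}$.

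For the closed form \eqref{equ-analyt_sol_f}, I would iterate integration by parts on $I_d(c):=\int_0^c e^{r^2/2}r^{d-1}dr$ with $u=r^{d-2}$ and $dv=re^{r^2/2}dr$, producing the recursion $I_d(c)=c^{d-2}e^{c^2/2}-(d-2)I_{d-2}(c)$. Unrolling it $\lceil(d-2)/2\rceil$ times terminates at the base case $I_1(c)=\sqrt{\pi/2}\,\textup{erfi}(c/\sqrt{2})$ for odd $d$ or $I_2(c)=e^{c^2/2}-1$ for even $d$; collecting the resulting double-factorial coefficients and dividing by $c^{d-2}$ gives the displayed formula. The main obstacle here is purely combinatorial bookkeeping, namely aligning signs, double-factorial indices, and the ceiling/floor exponents so the expression matches \eqref{equ-analyt_sol_f} exactly; the radial reduction, derivative, and uniqueness arguments are essentially forced once the Gaussian change of variables has been carried out.
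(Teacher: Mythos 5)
Your proposal is correct and follows essentially the same route as the paper: reduction to the isotropic radial problem, optimality of origin-centered balls at fixed volume (the paper's explicit comparison of $T(C,m,\Sigma)\setminus S$ with $S\setminus T(C,m,\Sigma)$ is exactly the bathtub argument you invoke), the spherical-coordinate computation of $SCV(d,c)$ and its derivative, and the integration-by-parts recursion for $f$ (which the paper verifies by induction on the closed form rather than unrolling constructively). Your uniqueness argument via the identity $G'(c)=e^{c^2/2}\left(c-\frac{d}{c}\right)-\frac{d}{c}G(c)$ is an equivalent variant of the paper's second-derivative test at critical points, both hinging on the sign of $c^2-d$.
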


Note that $SCV$ does not depend on the posterior mean $m$ or the posterior covariance matrix $\Sigma$. Thus any optimal choice of $c$ is going to be optimal for any normal posterior distribution with positive definite covariance matrix.

\begin{proof}
    Let us calculate the following quantity in order to minimize it according to the shape of $A_{or}$:\begin{align*}
    SCV_d(A_{or}):&=\frac{Var_{\theta^{(1)}}\left(\left.\frac{\mathbbm{1}_{A_{or}}(\theta^{(1)})/V(A)}{Zp(\theta^{(1)}|\mathcal{D})}\right|\mathcal{D}\right)}{E_{\theta^{(1)}}\left(\left.\frac{\mathbbm{1}_{A_{or}}(\theta^{(1)})/Z}{p(\mathcal{D})p(\theta^{(1)}|\mathcal{D})}\right|\mathcal{D}\right)^2}\\&=Z^2\left(\int_{A_{or}}\left(\frac{1/V(A_{or})}{Zp(\theta|\mathcal{D})}\right)^2p(\theta|\mathcal{D})\;d\theta-\frac{1}{Z^2}\right)=\int_{A_{or}} \frac{1/V(A_{or})^2}{p(\theta|\mathcal{D})}\;d\theta - 1
\end{align*}

Since $p(\theta|\mathcal{D})=:p(\mu|\mathcal{D})$ is the density of a normal distribution with mean $m$ an positive definite covariance matrix $\Sigma$ it follows that \begin{align*}
SCV_d(A_{or})&=\int_{A_{or}}\frac{1/V(A_{or})^2}{p(\mu|\mathcal{D})}d\mu-1\\
&=\frac{\sqrt{|\Sigma|}(2\pi)^{d/2}}{V(A_{or})^2}\int_{A_{or}} \exp\left(\frac{1}{2}(\mu-m)^t\Sigma^{-1}(\mu-m)\right)d\mu-1\\
&=\frac{\sqrt{|\Sigma|}(2\pi)^{d/2}}{V(A_{or})^2}\int_{T(A_{or},m,\Sigma)} \exp\left(\frac{\mu^t\mu}{2}\right)d\mu-1,
 \end{align*}
where $T(A_{or},m,\Sigma)=(\sqrt{\Sigma})^{-1}(A_{or}-m)=S$ denotes the translate of $A_{or}$ by $-m$ and the rescaling of $A_{or}$ by $(\sqrt{\Sigma})^{-1}$. From this it immediately follows that there exists a $c$ such that $A_{or}$ is optimal. 

\paragraph{Proving that $A_{or}$ is optimal}

Let $C$ be a measurable set with volume $V(C)\in (0,\infty)$. We can choose $c$ such that \begin{align*}
    V(S)=V(T(C,m,\Sigma))\Leftrightarrow V(A_{or})/\sqrt{|\Sigma|}=V(C)/\sqrt{|\Sigma|}.
\end{align*} This directly implies $V(A_{or})=V(C)$. It follows that \begin{align*}
    SCV_d(C)&=\frac{|\Sigma|(2\pi)^{d/2}}{V(C)^2}\int_{T(C,m,\Sigma)} \exp\left(\frac{\mu^t\mu}{2}\right)d\mu-1\\&=\frac{|\Sigma|(2\pi)^{d/2}}{V(A_{or})^2}\int_{T(C,m,\Sigma)\cap S} \exp\frac{\mu^t\mu}{2}d\mu\\&+\frac{|\Sigma|(2\pi)^{d/2}}{V(A_{or})^2}\int_{T(C,m,\Sigma)/S} \exp\left(\frac{\mu^t\mu}{2}\right)d\mu-1\\&\geq \frac{|\Sigma|(2\pi)^{d/2}}{V(A_{or})^2}\left(\int_{T(C,m,\Sigma)\cap S} \exp\frac{\mu^t\mu}{2}d\mu+\exp\left(\frac{c^2}{2}\right)V(T(C,m,\Sigma)/S)\right)-1.
\end{align*} This is due to the fact that $T(C,m,\Sigma)/S=\{\mu\in T(C,m,\Sigma)|\mu^T\mu\geq c^2\}.$ Using this and the fact that $V(S/T(C,m,\Sigma)=V(T(C,m,\Sigma)/S)$ one can similarly conclude \begin{align*}
    SCV_d(C)&\geq\frac{|\Sigma|(2\pi)^{d/2}}{V(A_{or})^2}\int_{T(C,m,\Sigma)\cap S} \exp\frac{\mu^t\mu}{2}d\mu\\&+\frac{|\Sigma|(2\pi)^{d/2}}{V(A_{or})^2}\int_{S/T(C,m,\Sigma)} \exp\left(\frac{\mu^t\mu}{2}\right)d\mu-1=SCV_d(A_{or}).
\end{align*} %$SCV_d(C)\geq SCV_d(A_{or})$ if, and only if \begin{align*}\int_{T(C,m,\Sigma)/S}\exp\left(\frac{\mu^t\mu}{2}\right)\;d\mu\geq  \int_{S/T(C,m,\Sigma)}\exp\left(\frac{\mu^t\mu}{2}\right)\;d\mu,
%\end{align*} 
It follows that for any measurable set $C$ a $c$ exists such that the $SCV$ (and thus the variance) of the THAMES is strictly lower when choosing $h$ to be the uniform distribution on $A_{or}$ instead of $C$.

\paragraph{Calculating an exact expression for the SCV}

 Let us remind that the area of a $d$-sphere of radius $r$ is $2\pi^{\frac{d}{2}}r^{d-1}/\Gamma\left({\frac {d}{2}}\right)$ and its volume is $\pi ^{d/2}r^{d}/\Gamma\left(\frac{d}{2}+1\right)$. Using the spherical coordinates, we get that $SCV_d(A_{or})=SCV(d,c)$ defined as follows for $d\geq 1$:\begin{align*}
SCV(d,c)
&=\frac{|\Sigma|}{V(A_{or})^2}\int_{T(A_{or},m,\Sigma)} (2\pi)^{d/2}\exp\left(\frac{\mu^t\mu}{2}\right)d\mu-1\\
&=\frac{(2\pi)^{d/2}|\Sigma|}{V(A_{or})^2}\int_0^c \exp\left(\frac{r^2}{2}\right)2\pi^{\frac{d}{2}}r^{d-1}/\Gamma\left({\frac {d}{2}}\right)dr-1\\
&=\frac{(2\pi)^{d/2}|\Sigma|2\pi^{\frac{d}{2}}}{\Gamma\left({\frac {d}{2}}\right)} \times \frac{1}{V(A_{or})^2}\int_0^c \exp\left(\frac{r^2}{2}\right)r^{d-1}dr-1\\
&=\frac{(2\pi)^{d/2}|\Sigma|2\pi^{\frac{d}{2}}}{\Gamma\left({\frac {d}{2}}\right)} \times \frac{1}{\pi ^{d}c^{2d}|\Sigma|/\Gamma\left(\frac{d}{2}+1\right)^2}\int_0^c \exp\left(\frac{r^2}{2}\right)r^{d-1}dr-1\\
&=\frac{d\cdot 2^{d/2}}{c^{2d}/\Gamma\left({\frac {d}{2}}+1\right)} \int_0^c \exp\left(\frac{r^2}{2}\right)r^{d-1}dr-1\\
&=\frac{\kappa_d}{c^{2d}}\int_0^c \exp\left(\frac{r^2}{2}\right)r^{d-1}dr-1
%\\
%&=\frac{\kappa_d}{a^{2d}}\left\{\left[\exp\left(\frac{1}{2}r^2(\sigma^2_n)^{-1}\right)\sigma^2_nr^{d-2}\right]_0^a-\int_0^a (d-2)\sigma^2_n \exp\left(\frac{1}{2}r^2(\sigma^2_n)^{-1}\right)r^{d-3}dr\right\}\\
%&=\frac{\kappa_d}{a^{2d}}\left\{\exp\left(\frac{1}{2}a^2(\sigma^2_n)^{-1}\right)\sigma^2_na^{d-2}-\mathds{1}\left(d=2\right)\sigma_n^2\right\}-(d-2)\sigma^2_na^{-2}SCV(d-2,a),
\end{align*}
%with $SCV(0,a)=0$ and $SCV(1,a)=?$ vérifier qu'on retrouve le $Q_c$ page 4.

First, the limiting behaviour of $SCV(d,c)$ w.r.t.\ $c$ can be observed by applying l'Hôpital's rule. We have \begin{align*}
    \lim_{c\to\infty}SCV(d,c)\propto\lim_{c\to\infty}\frac{\int^c_0\exp\left(\frac{r^2}{2}\right)r^{d-1}\;dr}{c^{2d}}=\lim_{c\to\infty}\frac{\exp\left(\frac{c^2}{2}\right)c^{d-1}}{(2d)\cdot c^{2d-1}}=\infty
\end{align*} and\begin{align*}
    \lim_{c\downarrow0}SCV(d,c)\propto\lim_{c\downarrow0}\frac{\int^c_0\exp\left(\frac{r^2}{2}\right)r^{d-1}\;dr}{c^{2d}}=\lim_{c\downarrow0}\frac{\exp\left(\frac{c^2}{2}\right)c^{d-1}}{(2d)\cdot c^{2d-1}}=\infty.
\end{align*}

Since $SCV(d,c)$ is continuous w.r.t.\ $c$ it follows that a global minimum exists and that this minimum can be found by setting the first derivative of $SCV(d,c)$ w.r.t.\ $c$ to 0. 

Let us now take the derivative of $SCV(d,c)$ w.r.t.\ $c$:
\begin{align*}
\frac{\partial SCV(d,c)}{\partial c}&=
\frac{-2d\kappa_d}{c^{2d+1}}\int_0^c \exp\left(\frac{r^2}{2}\right)r^{d-1}dr + \frac{\kappa_d}{c^{2d}}\exp\left(\frac{c^2}{2}\right)c^{d-1} \\
& = \frac{-2d\kappa_d}{c^{d+1}}\frac{1}{c^2}f(d,c)+\frac{\kappa_d}{c^{d+1}}\exp\left(\frac{c^2}{2}\right)
\end{align*}

The first order condition is thus \begin{align}
\frac{-2d\kappa_d}{c^{d+1}}\frac{1}{c^2}f(d,c)+\frac{\kappa_d}{c^{d+1}}\exp\left(\frac{c^2}{2}\right)=0\Leftrightarrow\frac{\frac{d}{c^2}f(d,c)}{\exp\left(\frac{c^2}{2}\right)} =\frac{1}{2},\label{equ-1st_order_condition_Q_c}
\end{align} or equivalently \begin{align}
    \frac{2d}{c^d}\int^c_0\exp\left(\frac{r^2}{2}\right)r^{d-1}\;dr=\exp\left(\frac{c^2}{2}\right).\label{equ-1st_order_condition_without_f}
\end{align}

Let $c_d$ be a point that fulfills Equation \eqref{equ-1st_order_condition_without_f}. We plug $c_d$ into the second order condition:\begin{align*}
    \left.\frac{\partial^2 SCV(d,c)}{(\partial c)^2}\right|_{c=c_d}&=\frac{-2d(-2d-1)\kappa_d}{c_d^{2d+2}}\int_0^{c_d} \exp\left(\frac{r^2}{2}\right)r^{d-1}dr \\&+ \frac{-2d\kappa_d}{c_d^{2d+1}}\exp\left(\frac{c_d^2}{2}\right)c_d^{d-1}\\&+\frac{\kappa_d(-d-1)}{c_d^{d+2}}\exp\left(\frac{c_d^2}{2}\right)+\frac{\kappa_d}{c_d^{d}}\exp\left(\frac{c_d^2}{2}\right)\\&=
     \kappa_d\frac{\exp\left(\frac{c_d^2}{2}\right)}{c_d^d}\left(1-\frac{d}{c_d^2}\right) \geq 0
\end{align*} This is the case if, and only if $c_d\geq \sqrt{d}$. Further any local maximum point $c$ needs to fulfill $c\leq \sqrt{d}$. Since any two local minima need to have at least 1 local maximum in between them, it follows that there exists only one minimum point, $c_d$. Still, the explicit solution of $f$ needs to be verified.

For $d\geq 1$,
\begin{align}
f(d,c) &= \frac{1}{c^{d-2}}\int_0^c \exp\left(\frac{r^2}{2}\right)r^{d-1}dr \\
&=\frac{1}{c^{d-2}}\left\{\left[\exp\left(\frac{r^2}{2}\right)r^{d-2}\right]_0^c-\int_0^c (d-2) \exp\left(\frac{r^2}{2}\right)r^{d-3}dr\right\}\\
&=\frac{1}{c^{d-2}}\left\{\exp\left(\frac{c^2}{2}\right)c^{d-2}-\mathbbm{1}\left(d=2\right)\right\}-\frac{d-2}{c^{d-2}}\int_0^c  \exp\left(\frac{r^2}{2}\right)r^{d-3}dr\\
&= :     \exp\left(\frac{1}{2}c^2\right)-\mathbbm{1}\left(d=2\right)-(d-2)\frac{1}{c^2}f(d-2,c),\label{equ-rec_relation_f}
\end{align}
with $f(0,c)=0$ and $f(1,c)=c\sqrt{\frac{\pi}{2}}\text{erfi}\left(\frac{c}{\sqrt{2}}\right)$.

To verify the explicit solution of $f$, it suffices to show that it fulfills the recursive relation \eqref{equ-rec_relation_f} for $d\geq 3$ and that the initial value conditions for $d=1,d=2$ hold.

\;

This is indeed the case, as this expression for $f$ fulfills the initial value conditions by definition and {\footnotesize
\begin{align*}
    &-f(d-2,c)\frac{1}{c^2}(d-2)+\exp\left(\frac{c^2}{2}\right)\\&=\exp\left(\frac{c^2}{2}\right)+(d-2)!!\left(-\frac{1}{c^2}\right)^{\left\lceil\frac{d-2}{2}\right\rceil}\left(f_{d-2,c}+\sum^{\left\lceil\frac{d-2}{2}\right\rceil-1}_{r=1}\frac{\exp\left(\frac{c^2}{2}\right)\left(-\frac{1}{c^2}\right)^{-r}}{\left(2r+2\left\lfloor \frac{d}{2}\right\rfloor-d\right)!!}\right)\\&=\exp\left(\frac{c^2}{2}\right)+(d-2)!!\left(-\frac{1}{c^2}\right)^{\left\lceil\frac{d-2}{2}\right\rceil}\cdot\left(f_{d,c}-\frac{\exp\left(\frac{c^2}{2}\right)}{(d-2)!!}\left(-\frac{1}{c^2}\right)^{-\left\lceil\frac{d-2}{2}\right\rceil}+\sum^{\left\lceil\frac{d-2}{2}\right\rceil}_{r=1}\frac{\exp\left(\frac{c^2}{2}\right)\left(-\frac{1}{c^2}\right)^{-r}}{\left(2r+2\left\lfloor \frac{d}{2}\right\rfloor-d\right)!!}\right)\\&=(d-2)!!\left(-\frac{1}{c^2}\right)^{\left\lceil\frac{d-2}{2}\right\rceil}\left(f_{d,c}+\sum^{\left\lceil\frac{d-2}{2}\right\rceil}_{r=1}\frac{\exp\left(\frac{c^2}{2}\right)\left(-\frac{1}{c^2}\right)^{-r}}{\left(2r+2\left\lfloor \frac{d}{2}\right\rfloor-d\right)!!}\right)\\&=f(d,c).
\end{align*}
}%
\end{proof}

\begin{lemma}\label{lem-simple_sol_f}
The analytical solution of $f$ (Equation \eqref{equ-analyt_sol_f}) can be simplified to \begin{align}f(d,c)=-\exp\left(\frac{c^2}{2}\right)\sum_{r=\left\lceil\frac{d-2}{2}\right\rceil+1}^{\infty}\frac{(d-2)!!\left(-\frac{1}{c^2}\right)^{\left\lceil\frac{d-2}{2}\right\rceil-r}}{\left(2r+2\left\lfloor \frac{d}{2}\right\rfloor-d\right)!!}.
\end{align} 
\end{lemma}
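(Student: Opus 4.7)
The plan is to reduce Lemma~\ref{lem-simple_sol_f} to the single auxiliary series identity
\begin{equation*}
f_{d,c} \;=\; -\exp\!\left(\tfrac{c^2}{2}\right)\sum_{r=1}^{\infty}\frac{(-1/c^2)^{-r}}{(2r+2\lfloor d/2\rfloor - d)!!},
\end{equation*}
after which the remainder is routine algebra. Indeed, substituting this identity into the closed-form expression of Lemma~\ref{lem-anal_sol_f} makes the finite sum $\sum_{r=1}^{\lceil(d-2)/2\rceil}$ appearing there the ``head'' of the infinite series attached to $f_{d,c}$; the minus sign in the identity cancels those head terms exactly, leaving only the ``tail'' $\sum_{r=\lceil(d-2)/2\rceil+1}^{\infty}$. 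Distributing the prefactor $(d-2)!!(-1/c^2)^{\lceil(d-2)/2\rceil}$ and combining $(-1/c^2)^{\lceil(d-2)/2\rceil}\cdot(-1/c^2)^{-r}=(-1/c^2)^{\lceil(d-2)/2\rceil - r}$ produces exactly the expression in Lemma~\ref{lem-simple_sol_f}. Since the denominators grow factorially in $r$, the series is absolutely convergent, so all of these rearrangements are legal.

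The work therefore concentrates on establishing the auxiliary identity, for which I would split on the parity of $d$, since $f_{d,c}$ is defined piecewise. If $d$ is even, the exponent $2\lfloor d/2\rfloor - d$ vanishes, the denominator becomes $(2r)!!=2^r r!$, and the identity collapses to $\exp(c^2/2)-1 = -\exp(c^2/2)\bigl(\exp(-c^2/2)-1\bigr)$, which is just the Taylor series of $\exp(-c^2/2)$ rewritten via $(-c^2)^r/(2r)!! = (-c^2/2)^r/r!$. If $d$ is odd, the exponent equals $-1$, the denominator becomes $(2r-1)!!$, and one must show that $f_{d,c} = -\exp(c^2/2)\sum_{r\geq 1}(-c^2)^r/(2r-1)!!$. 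I would verify this by multiplying through by $\exp(-c^2/2)$ and recognizing the right-hand side as a rescaled Dawson function, using $D(x):=e^{-x^2}\int_0^x e^{t^2}\,dt = \sum_{n\geq 0}(-1)^n 2^n x^{2n+1}/(2n+1)!!$ together with $\int_0^c e^{r^2/2}\,dr = \sqrt{\pi/2}\,\textup{erfi}(c/\sqrt{2})$ and the index shift $r = n+1$; the $\sqrt{2}$ and $\sqrt{\pi}$ factors reconcile to produce the stated $f_{d,c}$.

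The main obstacle is the odd-$d$ case, because it requires invoking the specific power-series expansion of $\textup{erfi}$ (equivalently Dawson's integral) and keeping careful track of several $\sqrt{2}$ and $\sqrt{\pi}$ factors; by contrast the even case and the final algebraic combination are elementary. An alternative that bypasses the Dawson series is to differentiate both the closed form of Lemma~\ref{lem-anal_sol_f} and the series claimed in Lemma~\ref{lem-simple_sol_f} with respect to $c$, verify that each satisfies the same first-order linear ODE (the one coming from Leibniz differentiation of $c^{-(d-2)}\int_0^c e^{r^2/2}r^{d-1}dr$), and match the initial behaviour as $c \downarrow 0$; absolute convergence justifies term-by-term differentiation on the series side. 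I would select whichever of these two routes yields the shorter formal write-up in the final appendix.
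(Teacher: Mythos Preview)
Your reduction to the auxiliary identity
\[
f_{d,c} = -\exp\!\left(\tfrac{c^2}{2}\right)\sum_{r\geq 1}\frac{(-1/c^2)^{-r}}{(2r+2\lfloor d/2\rfloor-d)!!}
\]
and the parity split are exactly what the paper does, and the even case is handled identically there via $(2r)!! = 2^r r!$ and the Taylor series of $\exp(-c^2/2)$.

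The genuine difference is in the odd case. The paper rewrites $(2r-1)!!$ in terms of $\Gamma(r+1/2)$, absorbs the factor $\exp(c^2/2)$ by a Cauchy product of two power series, and evaluates the resulting inner sums using the Gauss hypergeometric value ${}_2F_1(-r,1;\tfrac32;1)=\tfrac{1}{2r+1}$ before recognising the Maclaurin series of $\textup{erfi}$. Your route via the Dawson integral $D(x)=e^{-x^2}\!\int_0^x e^{t^2}\,dt=\sum_{n\geq 0}(-2)^n x^{2n+1}/(2n+1)!!$ is more direct: substituting $x=c/\sqrt{2}$ and shifting $n\mapsto r-1$ turns the series immediately into $-\sum_{r\geq 1}(-c^2)^r/(2r-1)!!$, and multiplying by $e^{c^2/2}$ recovers $\int_0^c e^{r^2/2}\,dr = \sqrt{\pi/2}\,\textup{erfi}(c/\sqrt{2})$, hence $f_{d,c}$, without the Cauchy product or the hypergeometric identity. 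This buys you a shorter argument with fewer special-function identities to quote; the paper's route is self-contained in the sense that it does not assume the Dawson series as known, but pays for that with the ${}_2F_1$ evaluation. Your alternative ODE verification (checking both expressions solve $f' + \tfrac{d-2}{c}f = c\,e^{c^2/2}$ with matching behaviour at $c\downarrow 0$) is also valid and would sidestep special functions entirely, though getting the initial condition right for small $c$ needs a little care.
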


Lemma \ref{lem-simple_sol_f} provides a closed form expression of $f$ for both the even and uneven case, but is likely only of theoretical interest, as it is not clear how this sum should be implemented numerically.

\begin{proof}
    We are trying to show that \begin{align*}&f(d,c_d)=
    (d-2)!!\left(-\frac{1}{c_d^2}\right)^{\left\lceil\frac{d-2}{2}\right\rceil}\left(f_{d,c}+\sum^{\left\lceil\frac{d-2}{2}\right\rceil}_{r=1}\frac{\exp\left(\frac{c_d^2}{2}\right)\left(-\frac{1}{c_d^2}\right)^{-r}}{\left(2r+2\left\lfloor \frac{d}{2}\right\rfloor-d\right)!!}\right)\\& =-\exp\left(\frac{c_d^2}{2}\right)\sum_{r=\left\lceil\frac{d-2}{2}\right\rceil+1}^{\infty}(d-2)!!\left.\left(-\frac{1}{c_d^2}\right)^{\left\lceil\frac{d-2}{2}\right\rceil-r}\right/\left(2r+2\left\lfloor \frac{d}{2}\right\rfloor-d\right)!!.
\end{align*} 

This simplification is a consequence of the fact that the sum is a Taylor series of $-f_{d,c}\exp\left(\frac{c_d^2}{2}\right)$ in both the even and the uneven case.
\newpage
\textbf{Case 1: $d$ is even}

We make use of the following identity: \begin{align*}
    (2r)!!=2\cdot 4\cdots 2r = 2^r (1\cdot 2\cdots r)=2^rr!
\end{align*}

Thus for $d$ even \begin{align*}
    \sum^{\infty}_{r=1}\frac{\exp\left(\frac{c_d^2}{2}\right)\left(-\frac{1}{c_d^2}\right)^{-r}}{\left(2r+2\left\lfloor \frac{d}{2}\right\rfloor-d\right)!!}&=\sum^{\infty}_{r=1}\frac{\exp\left(\frac{c_d^2}{2}\right)\left(-\frac{1}{c_d^2}\right)^{-r}}{(2r)!!}\\=\sum^{\infty}_{r=1}\frac{\exp\left(\frac{c_d^2}{2}\right)\left(-\frac{c_d^2}{2}\right)^{r}}{r!}&=-\left(\exp\left(\frac{c_d^2}{2}\right)-1\right)=-f_{d,c}.
\end{align*}

\textbf{Case 2: $d$ is uneven}

We make use of the following identity:\begin{align*}
    (2r-1)!!=\sqrt{\frac{2}{\pi}}2^{r-1/2}\Gamma(r+1/2)
\end{align*}

Thus for $d$ odd \begin{align*}
    &\sum^{\infty}_{r=1}\frac{\exp\left(\frac{c_d^2}{2}\right)\left(-\frac{1}{c_d^2}\right)^{-r}}{\left(2r+2\left\lfloor \frac{d}{2}\right\rfloor-d\right)!!}=\sum^{\infty}_{r=1}\frac{\exp\left(\frac{c_d^2}{2}\right)\left(-\frac{1}{c_d^2}\right)^{-r}}{(2r-1)!!}\\&=\sqrt{\pi}\sum^{\infty}_{r=1}\frac{\exp\left(\frac{c_d^2}{2}\right)\left(-\frac{c_d^2}{2}\right)^{r}}{\Gamma(r+1/2)}=-\frac{\sqrt{\pi}}{2}c_d^2\sum^\infty_{r=0}\frac{\exp\left(\frac{c_d^2}{2}\right)\left(-\frac{c_d^2}{2}\right)^{r}}{\Gamma(r+3/2)}\\&=-\frac{\sqrt{\pi}}{2}c_d^2\left(\sum^\infty_{r=0}\left(\frac{c_d^2}{2}\right)^r\frac{1}{r!}\right)\left(\sum^\infty_{r=0}\frac{\left(\frac{c_d^2}{2}\right)^{r}(-1)^r}{\Gamma(r+3/2)}\right)\\&=-\frac{\sqrt{\pi}}{2}c_d^2\sum^\infty_{r=0}\left(\frac{c_d^2}{2}\right)^{r}\sum^r_{k=0}\frac{1}{(r-k)!}\cdot\frac{(-1)^r}{\Gamma(r+3/2)}\\&=-\frac{\sqrt{\pi}}{2}c_d^2\sum^\infty_{r=0}\left(\frac{c_d^2}{2}\right)^{r}\sum^r_{k=0}\frac{(-1)^r/\Gamma(r+3/2)}{\Gamma(r-k+1)}=-\frac{\sqrt{\pi}}{2}c_d^2\sum^\infty_{r=0}\left(\frac{c_d^2}{2}\right)^{r}\frac{2/\Gamma(r+1)}{\sqrt{\pi}(2r+1)}\\&=-\sqrt{2}c_d\sum^\infty_{r=0}\left(\frac{c_d}{\sqrt{2}}\right)^{2r+1}\frac{1}{(2r+1)r!}=-\sqrt{\frac{\pi}{2}}c_d\text{erfi}\left(\frac{c_d}{\sqrt{2}}\right)=-f_{d,c},
\end{align*}

where we used the Cauchy product rule and a specific solution for the gaussian hypergeometric function evaluated at $(1,-r,\frac{3}{2},1)$. By (Olver et al.,§15.4(i), Equation 15.2.4) \begin{align*}
    F_{hypergeometric}(-r,1,\frac{3}{2},1)&=\sum^r_{k=0}\frac{\Gamma(1+k)}{\Gamma(1)}\cdot(-1)^k\binom{r}{k}\cdot\frac{\Gamma(3/2)}{\Gamma(k+3/2)}\\&=\sum^r_{k=0}k!\cdot\frac{(-1)^kr!}{(r-k)!k!}\cdot \frac{\sqrt{\pi}/2}{\Gamma(k+3/2)}\\&=\Gamma(r+1)\sqrt{\pi}/2\sum^r_{k=0}\frac{(-1)^k}{\Gamma(r-k+1)\Gamma(k+3/2)}.
\end{align*} Further by (Olver et al., §15.4(i), Equation 15.4.20)\begin{align*}
    F_{hypergeometric}(-r,1,\frac{3}{2},1)&=\frac{\Gamma(3/2)\Gamma(3/2-1+r)}{\Gamma(3/2-1)\Gamma(3/2+r)}\\&=\frac{1}{2}\cdot\frac{1}{(1/2+r)}=\frac{1}{(2r+1)}.
\end{align*}

%Source:

 %   NIST Digital Library of Mathematical Functions. http://dlmf.nist.gov/, Release 1.1.8 of 2022-12-15. F. W. J. Olver, A. B. Olde Daalhuis, D. W. Lozier, B. I. Schneider, R. F. Boisvert, C. W. Clark, B. R. Miller, B. V. Saunders, H. S. Cohl, and M. A. McClain, eds.
\end{proof}

It turns out that we can use this lemma to construct exact upper and lower bounds on the $SCV$.

\begin{proposition}\label{prop-exact_bounds_SCV} For any $c\in(0,\infty)$ which can depend on $d$\begin{align}
    SCV(d,c)\geq \Gamma\left({\frac {d}{2}}+1\right)\frac{2^{d/2-1}}{d^{d/2}}\exp\left(\frac{d}{2}\right)-1
\end{align} and for $c= \sqrt{d+1}$ \begin{align}\label{eq-SCV-exact-upperbound}
    SCV(d,c)\leq \Gamma\left({\frac {d}{2}}+1\right)\frac{2^{d/2}}{c^d}\exp\left(\frac{c^2}{2}\right)-1.
\end{align} For $c=\sqrt{d+L},L\in\mathbb{R}$, there exists a function $SCV'(d,c)$ such that \\$\lim_{d\to\infty}\frac{SCV(d,\sqrt{d+L})}{SCV'(d,\sqrt{d+L})}=1$ and inequality \eqref{eq-SCV-exact-upperbound} holds for $SCV'(d,c)$ in $c=\sqrt{d+L}$.\end{proposition}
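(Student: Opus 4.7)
All three claims start from the integral representation in Lemma \ref{lem-anal_sol_f}: $SCV(d,c)+1 = \kappa_d c^{-2d} \int_0^c \exp(r^2/2)\, r^{d-1}\,dr$ with $\kappa_d = d\,2^{d/2}\,\Gamma(d/2+1)$. The plan is to handle the lower bound via the first-order characterization of $c_d$, the upper bound via an elementary monotonicity bound on the integrand, and the asymptotic statement via a remark about the scope of the upper-bound argument.

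For the lower bound, I would first evaluate $SCV$ at the minimizer $c_d$ using the first-order condition $\int_0^{c_d}\exp(r^2/2)\,r^{d-1}\,dr = c_d^d \exp(c_d^2/2)/(2d)$ derived inside the proof of Lemma \ref{lem-anal_sol_f}. Substituting yields the clean identity $SCV(d,c_d)+1 = 2^{d/2-1}\,\Gamma(d/2+1)\,\exp(c_d^2/2)/c_d^d$. I would then analyze the scalar function $g(c) := \exp(c^2/2)/c^d$, whose derivative $\exp(c^2/2)(c^2-d)/c^{d+1}$ is nonnegative on $[\sqrt d,\infty)$, so $g$ attains its minimum on that interval at $c=\sqrt d$ with value $\exp(d/2)/d^{d/2}$. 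Since Lemma \ref{lem-anal_sol_f} guarantees $c_d \geq \sqrt d$, this gives the stated lower bound at $c_d$, and the minimizing property $SCV(d,c)\ge SCV(d,c_d)$ extends it to arbitrary $c\in(0,\infty)$.

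For the upper bound, the plan is to use the pointwise estimate $\exp(r^2/2)\le \exp(c^2/2)$ on $[0,c]$, giving $\int_0^c \exp(r^2/2)\,r^{d-1}\,dr \le \exp(c^2/2)\,c^d/d$. Multiplying by $\kappa_d/c^{2d}$ produces $SCV(d,c)+1 \le 2^{d/2}\,\Gamma(d/2+1)\,\exp(c^2/2)/c^d$, valid for \emph{every} $c>0$, in particular for $c=\sqrt{d+1}$. This same observation disposes of the final assertion: the trivial choice $SCV'(d,c) := SCV(d,c)$ satisfies both the asymptotic equivalence and the inequality at $c=\sqrt{d+L}$. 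A more informative $SCV'$ can be constructed by keeping only the leading ($r=\lceil(d-2)/2\rceil+1$) term of the series for $f(d,c)$ given in Lemma \ref{lem-simple_sol_f}; the same pointwise estimate still gives the upper bound, and Stirling's formula shows this leading term matches $SCV(d,\sqrt{d+L})$ asymptotically.

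So the two main inequalities are each essentially two-line calculations once Lemma \ref{lem-anal_sol_f} is in hand. The only part with nontrivial content is the asymptotic equivalence for an informative $SCV'$, which reduces to tail-bounding the series from Lemma \ref{lem-simple_sol_f} at $c=\sqrt{d+L}$; since the ratio of successive terms shrinks like $2r/(d+L)$, the leading term dominates once $r$ exceeds roughly $d/2$, making this estimate routine. The true technical difficulty of the proposition therefore lies entirely upstream, in Lemmas \ref{lem-anal_sol_f}--\ref{lem-simple_sol_f}, not in combining them.
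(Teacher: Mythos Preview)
Your argument is correct, and for the lower bound it matches the paper's: both evaluate $SCV$ at the minimizer via the first-order condition from Lemma~\ref{lem-anal_sol_f} and then minimize $c\mapsto\exp(c^{2}/2)/c^{d}$, reaching the value $\exp(d/2)/d^{d/2}$ at $c=\sqrt{d}$.

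For the upper bound, however, you take a genuinely different and much shorter route. The paper rewrites the condition $\tfrac{d}{c^{2}}f(d,c)\le\exp(c^{2}/2)$ via the alternating series of Lemma~\ref{lem-simple_sol_f}, pairs consecutive terms, and obtains the inequality only under the restriction $L\le 4$; this is why it must introduce a separate $SCV'$ (the series with the finitely many sign-offending terms removed) to handle general $L\in\mathbb{R}$. Your one-line estimate $\exp(r^{2}/2)\le\exp(c^{2}/2)$ on $[0,c]$ gives $\int_{0}^{c}\exp(r^{2}/2)r^{d-1}\,dr\le \exp(c^{2}/2)c^{d}/d$ directly, which is \emph{exactly} the inequality $\tfrac{d}{c^{2}}f(d,c)\le\exp(c^{2}/2)$ the paper works hard to establish, but now valid for every $c>0$. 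Consequently \eqref{eq-SCV-exact-upperbound} holds for all $c$, and the third claim becomes vacuous with $SCV'=SCV$. Your approach is strictly simpler here and loses nothing: the series machinery of Lemma~\ref{lem-simple_sol_f} is not needed for this proposition at all. (Your aside about an ``informative $SCV'$'' built from the leading series term is vague and not fully justified, but it is also unnecessary given that the trivial choice already suffices.)
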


%These exact lower bounds also give us the asymptotic behaviour of $SCV$.

\begin{proof}
    We know that the minimum point fulfills the first order condition (Equation \eqref{equ-1st_order_condition_Q_c}): \begin{align*}
    \frac{\frac{d}{c_d^2}f(d,c_d)}{\exp\left(\frac{c_d^2}{2}\right)}=\frac{1}{2}.\end{align*} 

This information can immediately be used to get a lower bound on $SCV(d,c)$ in $c=c_d$. Let $c_d^2=:L_d+d$. By Equation \eqref{equ-1st_order_condition_without_f} \begin{align*}
    \frac{2d}{c_d^2}f(d,c_d)=\frac{2d}{c^{d}_d}\int_0^{c_d} \exp\left(\frac{r^2}{2}\right)r^{d-1}dr.
\end{align*} 
It follows that
\begin{align*}
    SCV(d,c_d)&=\frac{d\cdot 2^{d/2}}{c_d^{2d}/\Gamma\left({\frac {d}{2}}+1\right)} \int_0^{c_d} \exp\left(\frac{r^2}{2}\right)r^{d-1}dr-1\\&=\frac{2^{d/2-1}}{1/\Gamma\left({\frac {d}{2}}+1\right)}\cdot\frac{1}{(d+L_d)^{d/2}}\exp\left(\frac{d+L_d}{2}\right)-1\\&\geq\frac{2^{d/2-1}}{1/\Gamma\left({\frac {d}{2}}+1\right)}\cdot\frac{1}{d^{d/2}}\exp\left(\frac{d}{2}\right)-1
\end{align*}

For this we used the fact that the function \begin{align*}
    L\mapsto \frac{\exp\left(\frac{L-2}{2}\right)}{\left(d+L\right)^{\frac{d}{2}}}
\end{align*} is minimized w.r.t.\ $L\in (-d,\infty)$ if, and only if \begin{align*}
    \frac{\partial}{\partial L}\left(\frac{1}{d+L}\right)^{\frac{d}{2}}\exp\left(\frac{L}{2}\right)=0\\\Leftrightarrow \exp\left(-\ln(d+L)\frac{d}{2}+\frac{L}{2}\right)\left(\frac{1}{2}-\frac{d}{2(d+L)}\right)=0\Leftrightarrow L=0.
\end{align*} Note that this is really a global minimum as the derivative changes signs at $L=0$. Since $c_d$ is a minimal point, this gives a real lower bound for any choice of $c$. %Also $L_d>-d$ as otherwise $c_d=\sqrt{d+L_d}$ is not defined. 

Determining the upper bound is similar, only that we now want to show that in the point $c=\sqrt{d+L},L\in\mathbb{R}$ the ratio which we used is bounded: \begin{align*}
    \frac{\frac{d}{(\sqrt{d+L})^2}f(d,\sqrt{d+L})}{\exp\left(\frac{(\sqrt{d+L})^2}{2}\right)} \leq1.\end{align*}  Due to Lemma \ref{lem-simple_sol_f} this is equivalent to \begin{align*}
    \sum_{r=\left\lceil\frac{d-2}{2}\right\rceil+1}^{\infty}d!!\left.\left(-\frac{1}{d+L}\right)^{\left\lceil\frac{d-2}{2}\right\rceil+1-r}\right/\left(2r+2\left\lfloor \frac{d}{2}\right\rfloor-d\right)!!&\leq1\\\Leftrightarrow \sum_{k=0}^{\infty}\frac{d!!\left(-(d+L)\right)^{k}}{\left(2k+d\right)!!}&\leq1.
\end{align*}

The sum is absolutely convergent, as it is a power series with infinite convergence radius (see the proof of Lemma \ref{lem-simple_sol_f}: Taylor series are power series, and in our case we know that these specific Taylor series converge everywhere). It follows that we are allowed to change the order of summation. It follows that, if $d+L\leq d+4$, \begin{align*}
    \sum_{k=0}^{\infty}\frac{d!!\left(-(d+L)\right)^{k}}{\left(2k+d\right)!!}&=1+\sum_{k=1}^{\infty}\frac{d!!\left(-(d+L)\right)^{k}}{\left(2k+d\right)!!}\\&=1+\sum_{k=1,k\text{ odd}}^{\infty}\left(\frac{d!!\left(-(d+L)\right)^{k}}{\left(2k+d\right)!!}+\frac{d!!\left(-(d+L)\right)^{k+1}}{\left(2k+d+2\right)!!}\right)\\&=1+\sum_{k=1,k\text{ odd}}^{\infty}\frac{d!!\left(-(d+L)\right)^{k}}{\left(2k+d\right)!!}\left(1-\frac{d+L}{d+2k+2}\right)\\&=1-\sum_{r=0}^{\infty}\frac{d!!\left(d+L\right)^{2r+1}}{\left(4r+d+2\right)!!}\left(1-\frac{d+L}{d+4r+4}\right)\leq 1,
\end{align*} as this implies that any component of the sum is non-negative. No restrictions on $d+L,L\in\mathbb{R}$ are necessary in the limiting case, since for the first finitely many negative expressions\begin{align*}
    \lim_{d\to\infty}\sum^{2\lceil |L|\rceil}_{k=0}\frac{d!!(-(d+L))^k}{(2k+d)!!}=\sum^{2\lceil |L|\rceil}_{k=0}(-1)^k=0.
\end{align*} $SCV'$ is then defined as the $SCV$, but with these first finitely many components set to 0.

This gives \begin{align*}
    SCV(d,\sqrt{d+L})&=\frac{d\cdot 2^{d/2}}{1/\Gamma\left({\frac {d}{2}}+1\right)}\cdot\frac{1}{(\sqrt{d+L})^{2d}} \int_0^{\sqrt{d+L}} \exp\left(\frac{r^2}{2}\right)r^{d-1}dr-1\\&\leq\frac{2^{d/2}}{1/\Gamma\left({\frac {d}{2}}+1\right)}\cdot\frac{1}{(\sqrt{d+L})^{d}}\exp\left(\frac{d+L}{2}\right)-1
\end{align*}

if $d+L\leq d+4$, so in particular if $L=1$. On the other hand the inequality holds for $SCV'$ for any $L\in\mathbb{R}$.
\end{proof}

Calculating the asymptotic behaviour of these bounds gives the proof of Theorem \ref{thm-limit_behav_SCV}:

\begin{proof}[Proof of Theorem \ref{thm-limit_behav_SCV}]

We are first going to prove the limiting behaviour of the $SCV$ w.r.t.\ $d$ and then use this result to provide a proof of the limiting behaviour of the optimal value $c_d$.

Applying Stirling's formula gives \begin{align*}
    SCV(d,c_d)&\geq\frac{2^{d/2-1}}{1/\Gamma\left({\frac {d}{2}}+1\right)}\cdot\frac{1}{d^{d/2}}\exp\left(\frac{d}{2}\right)-1\\&\geq\frac{2^{d/2-1}}{d^{d/2}}\left(\sqrt{2\pi}\left(\frac{d}{2}+1\right)^{\frac{d}{2}+1-\frac{1}{2}}\right)\exp\left(\frac{d}{2}-\frac{d}{2}-1\right)-1\\&=\frac{2^{d/2}\sqrt{\pi (d+2)/4}}{d^{d/2}}\left(\left(\frac{d}{2}+1\right)^{\frac{d}{2}}\right)\exp\left(-1\right)-1\\&=\sqrt{\pi (d+2)/4}\left(\left(\frac{d+2}{d}\right)^{\frac{d}{2}}\right)\exp\left(-1\right)-1\simeq\sqrt{\pi (d+2)/4}-1.
\end{align*} 
The derivative of the logarithm of the function \begin{align*}
    d\mapsto \left(\frac{d+2}{d}\right)^{d}=\exp\left(d\ln\left(\frac{d+2}{d}\right)\right)
\end{align*} is \begin{align*}
    \ln\left(\frac{d+2}{d}\right)-\frac{2}{d+2}
\end{align*} and its second derivative is\begin{align*}
    \frac{1}{d+2}-\frac{1}{d}+\frac{2}{(d+2)^2}<0\Leftrightarrow (d+2)d-(d+2)^2+2d=-4<0.
\end{align*} Thus the first derivative is decreasing towards its limit, 0, which implies that the first derivative is strictly greater than 0. It follows that this function is strictly increasing in $d\geq 1$ with supremum $\exp(2)$ and minimum \begin{align*}
    \left.\left(\frac{d+2}{d}\right)^{d}\right|_{d=1}=3.
\end{align*}

An exact inequality is thus given by \begin{align*}
    SCV(d,c_d)&\geq \exp(-1)\sqrt{3}\sqrt{\pi(d+2)/4}-1\\&\geq 0.63\sqrt{\pi(d+2)/4}-1
\end{align*}

Analogously, applying the other direction of Stirling's formula \begin{align*}SCV(d,\sqrt{d+1})&\leq \frac{2^{d/2}}{1/\Gamma\left({\frac {d}{2}}+1\right)}\cdot\frac{1}{(d+1)^{\frac{d}{2}}}\exp\left(\frac{d+1}{2}\right)-1\\&\leq2\sqrt{\pi/4}\left(\frac{d+2}{d+1}\right)^{\frac{d}{2}}\sqrt{d+2}\exp\left(-\frac{1}{2}+\frac{1}{12d}\right)-1\\&\simeq2\cdot \sqrt{(d+2)\pi/4}-1.\end{align*}

with an exact inequality given by \begin{align*}
    SCV(d,\sqrt{d+1})&\leq 2\sqrt{\pi/4}\left(\frac{d+2}{d+1}\right)^{\frac{d}{2}}\sqrt{d+2}\exp\left(-\frac{1}{2}+\frac{1}{12d}\right)-1\\&\leq 2\sqrt{(d+2)\pi/4}\exp(1/12)-1\leq 1.09\cdot 2\sqrt{(d+2)\pi/4}-1.
\end{align*}
One can analogously apply the asymptotic Stirling's formula for $c=\sqrt{d+L},L\in\mathbb{R}$:\begin{align}
    SCV(d,\sqrt{d+L})&\simeq SCV'(d,\sqrt{d+L})\leq \frac{2^{d/2}}{1/\Gamma\left({\frac {d}{2}}+1\right)}\cdot\frac{1}{(d+L)^{\frac{d}{2}}}\exp\left(\frac{d+L}{2}\right)-1\\&\simeq2\sqrt{\pi/4}\left(\frac{d+2}{d+L}\right)^{\frac{d}{2}}\sqrt{d+2}\exp\left(-\frac{L}{2}\right)-1\\&\simeq2\cdot \sqrt{(d+2)\pi/4}-1.
\end{align}

In total\begin{align*}
    1\leq\liminf_{d\to\infty}\frac{SCV(d,\sqrt{d+L_d})}{\sqrt{\pi(d+2)/4}}\leq\limsup_{d\to\infty}\frac{SCV(d,\sqrt{d+L})}{\sqrt{\pi(d+2)/4}}\leq 2,
\end{align*} for all $L\in\mathbb{R}$. This is equivalent to Theorem \ref{thm-limit_behav_SCV}.\end{proof}\begin{proof}[Proof of Theorem \ref{thm-limit_behav_c_d}]

First of all, the fact that $A_{or}$ with radius $c_d=\sqrt{d+L_d}$ minimizes the $SCV$ and $L_d\geq0$ is a result of Lemma \ref{lem-anal_sol_f}. 

To prove that $\frac{L_d}{d}\stackrel{d\to\infty}\to0$, we show that \begin{align*}
    \limsup_{d\to\infty}\frac{L_d}{d}\leq 0
\end{align*} and \begin{align*}
    \liminf_{d\to\infty}\frac{L_d}{d}\geq0.
\end{align*} Note that \begin{align*}
    2\geq \limsup_{d\to\infty}\frac{SCV(d\sqrt{d+L_d})}{\sqrt{\pi(d+2)/4}}\geq \limsup_{d\to\infty} \left(\frac{d+2}{d+L_d}\right)^{\frac{d}{2}}\exp\left(\frac{L_d-2}{2}\right),
\end{align*} so any $L_d$ for which this expression diverges can be excluded. Further the function \begin{align*}
    L\mapsto \left(\frac{d+2}{d+L}\right)^{\frac{d}{2}}\exp\left(\frac{L-2}{2}\right)
\end{align*} is decreasing if $-d<L<0$ and increasing if $L>0$. Both of these facts follow from previous statements.

Suppose that there exists an $\alpha\in(0,1)$ such that \begin{align*}
    \limsup_{d\to\infty}\frac{L_d}{d}> \alpha.
\end{align*} This implies that \begin{align*}
    L_d>d\alpha
\end{align*} for infinitely many $d$. For these $d$ it holds that \begin{align*}
    \left(\frac{d+2}{d+L_d}\right)^{\frac{d}{2}}\exp\left(\frac{L_d-2}{2}\right)\geq \left(\frac{d+2}{d+d\alpha}\right)^{\frac{d}{2}}\exp\left(\frac{d\alpha-2}{2}\right)\\\propto \left(\frac{d+2}{d}\right)^{\frac{d}{2}}\left(\frac{\exp(\alpha)}{1+\alpha}\right)^{\frac{d}{2}}\simeq \exp(1)\left(\frac{\exp(\alpha)}{1+\alpha}\right)^{\frac{d}{2}}.
\end{align*}

This sequence diverges because of the well known inequality $\exp(\alpha)-1>\alpha$ if $\alpha\neq0$. It follows that \begin{align*}
    \limsup_{d\to\infty}\frac{L_d}{d}\leq \alpha\forall \alpha\in(0,1)\Rightarrow \limsup_{d\to\infty}\frac{L_d}{d}\leq0.
\end{align*}

If on the other hand \begin{align*}
    \liminf_{d\to\infty}\frac{L_d}{d}<-\alpha
\end{align*} we have that \begin{align*}
    L_d<-d\alpha
\end{align*} for infinitely many $d$ and by the same argument \begin{align*}
    \left(\frac{d+2}{d+L_d}\right)^{\frac{d}{2}}\exp\left(\frac{L_d-2}{2}\right)\geq \left(\frac{d+2}{d-d\alpha}\right)^{\frac{d}{2}}\exp\left(\frac{-d\alpha-2}{2}\right)\\\propto \left(\frac{d+2}{d}\right)^{\frac{d}{2}}\left(\frac{\exp(-\alpha)}{1-\alpha}\right)^{\frac{d}{2}}\simeq \exp(1)\left(\frac{\exp(-\alpha)}{1-\alpha}\right)^{\frac{d}{2}}
\end{align*} we have a contradiction. So \begin{align*}
    \liminf_{d\to\infty}\frac{L_d}{d}\geq-\alpha\forall\alpha\in(0,1)\Rightarrow \liminf_{d\to\infty}\frac{L_d}{d}\geq0.
\end{align*} In total \begin{align*}
    0\leq \liminf_{d\to\infty}\frac{L_d}{d}\leq \limsup_{d\to\infty}\frac{L_d}{d}\leq 0.
\end{align*}\end{proof}\begin{proof}[Proof of Theorem \ref{thm-n_limit_approx}]

Let \begin{align*}
    A_n:=\{\theta|\|(\theta-m_n)^T\Sigma_n^{-1}(\theta-m_n)\|^2<c^2\}
\end{align*} be the set used for the THAMES when applying it to estimate the marginal $Z_n^{-1}$ corresponding to $p_n(\theta|\mathcal{D}_n)$. The corresponding $SCV$ is then defined by \begin{align*}
    SCV_{n}(d,c):&=\frac{Var_{\theta^{(1)}}\left(\left.\frac{\mathbbm{1}_{A_n}(\theta^{(1)})/V(A_n)}{Z_n p(\theta^{(1)}|\mathcal{D})}\right|\mathcal{D}\right)}{E_{\theta^{(1)}}\left(\left.\frac{\mathbbm{1}_{A_n}(\theta^{(1)})/V(A_n)}{Z_n p(\theta^{(1)}|\mathcal{D})}\right|\mathcal{D}\right)^2}\\&=Z_n^2\left(\int_{A_n}\left(\frac{1/V(A_n)}{Z_np_n(\theta|\mathcal{D}_n)}\right)^2p_n(\theta|\mathcal{D}_n)\;d\theta-\frac{1}{Z_n^2}\right)=\int_{A_n}\frac{1/V(A_n)^2}{p_n(\theta|\mathcal{D}_n)}\;d\theta-1.
\end{align*}

Rescaling and shifting $A_n$ to the sphere \begin{align*}
    S=\{\theta|\theta^T\theta\leq c^2\}
\end{align*} gives \begin{align*}
    SCV_n(d,c)&=\int_{A_n}\frac{1/V(A_n)^2}{p_n(\theta|\mathcal{D}_n)}\;d\theta-1=\int_{A_n}\frac{1/\left(|\Sigma_n|V(S)^2\right)}{p_n(\theta|\mathcal{D}_n)}\;d\theta-1\\&=\int_{S}\frac{|\Sigma_n|^{\frac{1}{2}}/\left(|\Sigma_n|V(S)^2\right)}{p_n\left(\left.\Sigma_n^{\frac{1}{2}}\cdot\theta+m_n\right|\mathcal{D}_n\right)}\;d\theta-1=\frac{1}{V(S)^2}\int_{S}\frac{1}{|\Sigma_n|^{\frac{1}{2}}p_n\left(\left.\Sigma_n^{\frac{1}{2}}\cdot\theta+m_n\right|\mathcal{D}_n\right)}\;d\theta-1.
\end{align*}

Let $\|\cdot\|_{\infty,S}$ denote the supremum norm on the set $S$. By uniform convergence on the compact set $S$ \begin{align*}&|SCV_n(d,c)-SCV(d,c)|\\
    &=\frac{1}{V(S)^2}\left|\int_{S}\frac{1}{|\Sigma_n|^{\frac{1}{2}}p_n\left(\left.\Sigma_n^{\frac{1}{2}}\cdot\theta+m_n\right|\mathcal{D}_n\right)}\;d\theta-\int_{S}\frac{1}{|\Sigma|^{\frac{1}{2}}p\left(\left.\Sigma^{\frac{1}{2}}\cdot\theta+m\right|\mathcal{D}\right)}\;d\theta\right|\\&
    \leq\frac{1}{V(S)^2}\int_{S}\left|\frac{1}{|\Sigma_n|^{\frac{1}{2}}p_n\left(\left.\Sigma_n^{\frac{1}{2}}\cdot\theta+m_n\right|\mathcal{D}_n\right)}-\frac{1}{|\Sigma|^{\frac{1}{2}}p\left(\left.\Sigma^{\frac{1}{2}}\cdot\theta+m\right|\mathcal{D}\right)}\right|\;d\theta\\&\leq \frac{1}{V(S)}\left\|\frac{1}{|\Sigma_n|^{\frac{1}{2}}p_n\left(\left.\Sigma_n^{\frac{1}{2}}\cdot\theta+m_n\right|\mathcal{D}_n\right)}-\frac{1}{|\Sigma|^{\frac{1}{2}}p\left(\left.\Sigma^{\frac{1}{2}}\cdot\theta+m\right|\mathcal{D}\right)}\right\|_{\infty,S}\\&=\frac{1}{V(S)}\left\|\frac{|\Sigma_n|^{\frac{1}{2}}p_n\left(\left.\Sigma_n^{\frac{1}{2}}\cdot\theta+m_n\right|\mathcal{D}_n\right)-|\Sigma|^{\frac{1}{2}}p\left(\left.\Sigma^{\frac{1}{2}}\cdot\theta+m\right|\mathcal{D}\right)}{|\Sigma|^{\frac{1}{2}}p_n\left(\left.\Sigma^{\frac{1}{2}}\cdot\theta+m\right|\mathcal{D}_n\right)\cdot|\Sigma_n|^{\frac{1}{2}}p\left(\left.\Sigma_n^{\frac{1}{2}}\cdot\theta+m_n\right|\mathcal{D}\right)}\right\|_{\infty,S}\\&\leq \frac{1}{V(S)}\left\|\frac{1}{|\Sigma|^{\frac{1}{2}}p_n\left(\left.\Sigma^{\frac{1}{2}}\cdot\theta+m\right|\mathcal{D}_n\right)}\right\|_{\infty,S}\left\|\frac{1}{|\Sigma_n|^{\frac{1}{2}}p\left(\left.\Sigma_n^{\frac{1}{2}}\cdot\theta+m_n\right|\mathcal{D}\right)}\right\|_{\infty,S}\\&\cdot\left\||\Sigma_n|^{\frac{1}{2}}p_n\left(\left.\Sigma_n^{\frac{1}{2}}\cdot\theta+m_n\right|\mathcal{D}_n\right)-|\Sigma|^{\frac{1}{2}}p\left(\left.\Sigma^{\frac{1}{2}}\cdot\theta+m\right|\mathcal{D}\right)\right\|_{\infty,S}\stackrel{n\to\infty}\to0,
\end{align*}

since uniform convergence means convergence in the supremum norm and since the two reciprocals are uniformly bounded away from 0 on $S$ by the same reason. 

Let $C\subset(0,\infty)$ be compact. $1/V(S)$ is maximized by $c=\min C$ and the supremum is maximized by $c=\max C$. Denoting the resulting spheres by $S_{min},S_{max}$ respectively, gives \begin{align*}
    &\|SCV_n(d,c)-SCV(d,c)\|_{K,\infty} \\&\leq \frac{1}{V(S_{min})}\left\|\frac{1}{|\Sigma|^{\frac{1}{2}}p_n\left(\left.\Sigma^{\frac{1}{2}}\cdot\theta+m\right|\mathcal{D}_n\right)}\right\|_{\infty,S_{max}}\left\|\frac{1}{|\Sigma_n|^{\frac{1}{2}}p\left(\left.\Sigma_n^{\frac{1}{2}}\cdot\theta+m_n\right|\mathcal{D}\right)}\right\|_{\infty,S_{max}}\\&\cdot\left\||\Sigma_n|^{\frac{1}{2}}p_n\left(\left.\Sigma_n^{\frac{1}{2}}\cdot\theta+m_n\right|\mathcal{D}_n\right)-|\Sigma|^{\frac{1}{2}}p\left(\left.\Sigma^{\frac{1}{2}}\cdot\theta+m\right|\mathcal{D}\right)\right\|_{\infty,S_{max}}\stackrel{n\to\infty}\to0
\end{align*} for all $c\in C$. Thus the convergence is uniform on all compact subsets of $(0,\infty)$.

Let us fix $d$ $SCV_n(d,c)$ and restrict $c$ to an interval $[a,b], 0<a\leq c_d\leq b$. Then $SCV_n(d,c)$ converges uniformly in $c$ on $[a,b]$. Further, $SCV_n(d,c)$ is continuous in $c$ since an integral over an absolutely continuous function is continuous. Uniform convergence of continuous functions implies epigraphical convergence \cite[Proposition 7.15]{Rockafellar&2009}.  

$c_d$ is the unique minimal point of the limiting function $SCV(d,c)$ and the level sets $\{c\in[a,b]|SCV_n(d,c)\leq \alpha\}$ are bounded by definition. The statement in Theorem \ref{thm-n_limit_approx} follows by \cite[Theorem 7.33]{Rockafellar&2009}.

\end{proof}

\section*{Appendix 2: Derivations of Analytical Expressions for the Examples}

\noindent After the following proposition, all proofs for the multivariate Gaussian model of Section \ref{ssec:multivariateGaussian} are provided. 
\begin{proposition}\label{appendix:multivariateGaussian}

If $Y_i \in \mathbb{R}^{d}, i=1,\dots,n$, are drawn independently
from the multivariate normal distribution:
\begin{eqnarray*}
Y_i|\mu & \stackrel{\rm iid}{\sim} & {\rm MVN}_d(\mu, I_d), \;\; i=1,\ldots, n,
\end{eqnarray*}
and if the following prior distribution is considered for the mean vector $\mu$:
\begin{equation*}
    p(\mu)= {\rm MVN}_d(\mu; 0_d, s_0I_d),
\end{equation*}
with $s_0 > 0$, then the posterior distribution of the mean vector $\mu$ given the data $\mathcal{D}=\{y_1, \dots, y_n\}$ is given by:
\begin{equation*}
    p(\mu|\mathcal{D}) = {\rm MVN}_d(\mu; m_n,s_n I_d),
\end{equation*}
where $m_n=n\bar{y}/(n+1/s_0)$, $\bar{y}=(1/n)\sum_{i=1}^n y_i$, and $s_n=1/(n +1/s_0)$. Moreover, the marginal likelihood of the model can be written analytically as:
\begin{equation*}
    p(\mathcal{D}) = \prod_{j=1}^d {\rm MVN}_n (y_{.j}; 0_n, s_0 1_n 1_n^{\intercal} +   I_n),
\end{equation*}
where $y_{.j} \in \mathbb{R}^n$ is the vector of all observations for variable $j$ such that $[y_{.j}]_i = y_{ij}$ and $1_n$ is the vector of 1 in $\mathbb{R}^n$. 

\begin{proof}
    Let us first derive the analytical expression for the posterior distribution of the mean vector $\mu$:
    \begin{equation*}
        p(\mu | \mathcal{D}) \propto p(\mu)p(\mathcal{D}|\mu).
    \end{equation*}
    Taking the log of the expression and focusing on the terms in $\mu$:
    % \begin{equation*}
    \begin{align*}
        \log p(\mu | \mathcal{D}) &= \log p(\mu) + \sum_{i=1}^{n}\log p(y_i|\mu) + \mathrm{cst}_1 \\
        &= -\frac{1}{2s_0}\mu^{\intercal}\mu  - \frac{1}{2}\sum_{i=1}^{n} (y_i-\mu)^{\intercal}(y_i - \mu) + \mathrm{cst}_2 \\
        &= -\frac{1}{2s_0}\mu^{\intercal}\mu  - \frac{1}{2}\sum_{i=1}^{n} \left(y_i^{\intercal}y_i - 2\mu^{\intercal}y_i + \mu^{\intercal}\mu  \right) + \mathrm{cst}_2 \\ 
        &= -\frac{1}{2}\mu^{\intercal}(\frac{I_d}{s_0} + nI_d)\mu + \mu^{\intercal}\sum_{i=1}^{n}y_i + \mathrm{cst}_3 \\
        &= -\frac{1}{2}\mu^{\intercal}\frac{1+n s_0}{s_0}I_d \mu + \mu^{\intercal} n \bar{y} + \mathrm{cst}_3.
        \end{align*}
    % \end{equation*}
    By identification, we recognize the functional form of a Gaussian distribution:
    \begin{equation*}
        p(\mu | \mathcal{D}) = {\rm MVN}_d(\mu; m_n, s_n I_d), 
    \end{equation*}
    where 
    % \begin{equation*}
    \begin{align*}
        s_n &= \frac{s_0}{1 + n s_0} \\
        &= \frac{1}{n + 1/s_0},
        \end{align*}
    % \end{equation*}
    and
    % \begin{equation*}
        \begin{align*}
            m_n &= s_n n \bar{y} \\
            &= \frac{1}{n + 1/s_0}n \bar{y}.
        \end{align*}
    % \end{equation*}

    To compute the marginal likelihood of the model, we first introduce the notation $Y_{.j}\in \mathbb{R}^{n}$ which corresponds to the vector of all observations for variable $j$ such that $[Y_{.j}]_i = Y_{ij}$. By construction:
    \begin{equation*}
        Y_{.j}| \mu_j \sim {\rm MVN}_n(\mu_j 1_n, I_n), \forall j \in \{1, \dots, d\},
    \end{equation*}
    where $\mu_j \in \mathbb{R}$ is the component $j$ of the vector $\mu$ and $1_n$ is the vector of 1 in $\mathbb{R}^n$. Note that the covariance matrix $I_n$ comes for the fact that all observations are independent with unit variance, given $\mu$. Interestingly, while the observations $(Y_i)_i$ are independent given the vector $\mu$, they are not marginally, and the marginal likelihood does not take a product form over marginal terms in $i$. Conversely, thanks to the isotropic Gaussian prior distribution which is considered for $\mu$, where the $(\mu_j)_j$ are all iid, not only are the vectors $(Y_{.j})_j$ independent given $\mu$, they are also independent marginally:
    % \begin{equation*}
    \begin{align*}
        p(\mathcal{D}) &= \int p(y_{.1}, \dots, y_{.d}|\mu)p(\mu)d\mu \\
        &= \int \prod_{j=1}^{d}\left( p(y_{.j}|\mu_j)p(\mu_j) \right) d\mu \\
        &= \prod_{j=1}^{d} \int p(y_{.j}|\mu_j)p(\mu_j) d\mu_j \\
        &= \prod_{j=1}^{d} p(y_{.j}).
        \end{align*}
    % \end{equation*}
    Finally, to compute, $p(y_{.j})$, $Y_{.j}$ can be written as:
    \begin{equation*}
        Y_{.j} = \mu_j 1_n + \epsilon_j,
    \end{equation*}
    where 
    \begin{equation*}
      \epsilon_j \sim {\rm MVN}_n(0_n, I_n).  
    \end{equation*}
Thus, by construction, $Y_{.j}$ is defined as a product and sum over $\mu_j \sim \mathcal{N}(0, s_0)$ and $\epsilon_j \sim {\rm MVN}_n(0_n, I_n)$, which are independent from one another. Therefore, from Gaussian property:
   \begin{equation*}
        Y_{.j} \sim {\rm MVN}_n(0_n, s_0 1_n 1_n^{\intercal} + I_n).
    \end{equation*}
    Finally
    \begin{equation*}
        p(\mathcal{D})=\prod_{j=1}^{d} {\rm MVN}_n(y_{.j}; 0_n, s_0 1_n 1_n^{\intercal} + I_n).
    \end{equation*}
\end{proof}

\end{proposition}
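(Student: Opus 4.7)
The plan is to handle the two claims of the proposition in order, first the posterior and then the marginal likelihood, exploiting the isotropic structure of both the prior and the sampling distribution so that everything factorizes across the $d$ coordinates of $\mu$.

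For the posterior, I would start from Bayes' rule, $p(\mu\mid\mathcal{D})\propto p(\mu)\prod_{i=1}^n p(y_i\mid\mu)$, take logarithms, and collect the quadratic and linear terms in $\mu$. The prior contributes $-\mu^\intercal\mu/(2s_0)$ and the likelihood contributes $-\tfrac{1}{2}\sum_i(y_i-\mu)^\intercal(y_i-\mu)$. After expanding, dropping terms not involving $\mu$, and using $\sum_i y_i = n\bar y$, the log-posterior reduces to $-\tfrac12 \mu^\intercal(1/s_0 + n)I_d\,\mu + \mu^\intercal(n\bar y)$, plus a constant. Matching coefficients with a generic Gaussian log-density immediately gives posterior precision $(n+1/s_0)I_d$, hence $s_n = 1/(n+1/s_0)$, and posterior mean $s_n\cdot n\bar y = n\bar y/(n+1/s_0) = m_n$.

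For the marginal likelihood, the plan is to exploit that $p(\mu) = \prod_j p(\mu_j)$ with $\mu_j\sim \mathcal{N}(0,s_0)$ (isotropic prior), and that the coordinate-wise conditional distributions also factor, $p(\mathcal{D}\mid\mu) = \prod_j p(y_{\cdot j}\mid\mu_j)$ with $y_{\cdot j}\mid\mu_j\sim \mathrm{MVN}_n(\mu_j \mathbf{1}_n, I_n)$, because both the likelihood and the prior are diagonal in the coordinate basis. Fubini then gives
\[
p(\mathcal{D}) = \int \prod_{j=1}^d \bigl[p(y_{\cdot j}\mid \mu_j)p(\mu_j)\bigr]\,d\mu = \prod_{j=1}^d \int p(y_{\cdot j}\mid \mu_j)p(\mu_j)\,d\mu_j = \prod_{j=1}^d p(y_{\cdot j}).
\]
To identify each factor, I would write $Y_{\cdot j} = \mu_j \mathbf{1}_n + \varepsilon_j$ with $\mu_j\sim\mathcal{N}(0,s_0)$ and $\varepsilon_j\sim\mathrm{MVN}_n(0,I_n)$ independent. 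As a linear combination of independent Gaussian vectors, $Y_{\cdot j}$ is Gaussian, with mean $0_n$ and covariance $\mathrm{Var}(\mu_j \mathbf{1}_n) + \mathrm{Var}(\varepsilon_j) = s_0 \mathbf{1}_n\mathbf{1}_n^\intercal + I_n$, which is precisely the stated form.

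There is no real obstacle here; it is a standard Gaussian conjugacy calculation. The only subtle point worth flagging is the asymmetry in the dependence structure: conditional independence of the $Y_i$ given $\mu$ does \emph{not} imply marginal independence across $i$ (because integrating out the shared mean $\mu$ induces correlation), while the isotropy of both the prior and the per-observation covariance \emph{does} give marginal independence across the coordinate index $j$. It is this second independence, rather than the first, that makes $p(\mathcal{D})$ factorize, which is the only nontrivial insight needed in the argument.
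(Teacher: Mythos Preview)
Your proposal is correct and follows essentially the same approach as the paper: completing the square in $\mu$ for the posterior, then factorizing the marginal likelihood across coordinates $j$ via the isotropic prior and writing $Y_{\cdot j}=\mu_j\mathbf{1}_n+\varepsilon_j$ to read off the marginal Gaussian law. Your remark on the $i$-vs-$j$ dependence asymmetry is exactly the observation the paper highlights as well.
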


\noindent After the following proposition, all proofs for the Bayesian linear regression model of Section \ref{ssec:bayesianRegression} are provided. 
\begin{proposition}\label{prop:linreg}
If a  linear regression model of the form
\begin{equation*}
    Y_i | x_i, \beta \sim \mathcal{N}(x_i^{\intercal} \beta, \sigma^2), i=1,\dots, n,
\end{equation*}
is considered where $Y_i \in \mathbb{R}, x_i \in \mathbb{R}^d, \beta \in \mathbb{R}^d, \sigma^2 \in \mathbb{R}$, with the variance $\sigma^2 > 0$ known, and if the following prior distribution is considered for the regression vector $\beta$:
\begin{equation*}
p(\beta) = {\rm MVN}_d(\beta; 0_d, I_d / \alpha),
\end{equation*}
with $\alpha > 0$, then the posterior distribution of the regression vector $\beta$ given the training data set $\mathcal{D} = \{(x_1, y_1), \dots, (x_n, y_n)\}$ is given by:
\begin{equation*}
    p(\beta| \mathcal{D}) = {\rm MVN}_d(\beta; m_n, \Sigma_n),
\end{equation*}
with
\begin{equation*}
    \Sigma_n^{-1} = \frac{X^{T}X}{\sigma^2} + \alpha I_d,
\end{equation*}
and
\begin{equation}
    m_n = (\alpha \sigma^2 I_d + X^{\intercal}X)^{-1} X^{\intercal} \textbf{y},
\end{equation}
where $\textbf{y} \in \mathbbm{R}^n$ is the vector of observed target variables $y_i$, and $X \in \mathcal{M}_{n \times d}(\mathbbm{R})$ is the design matrix where the input vectors $x_i \in \mathbbm{R}^d$ are stacked as row vectors. Moreover, the marginal likelihood of the model can also be written analytically as: 
\begin{equation*}
    p(\textbf{y}|X) = {\rm MVN}_n(\textbf{y}; O_n, \frac{XX^{T}}{\alpha} + \sigma^2 I_n).
\end{equation*}    

\begin{proof}

Relying on matrix notations, the linear regression model can be written as:
\begin{equation}
Y| X, \beta \sim {\rm MVN}_n(X\beta, \sigma^2 I_n),
\end{equation}
where $Y \in \mathbbm{R}^n$ is the random vector of target variables $Y_i$, and $X \in \mathcal{M}_{n \times d}(\mathbbm{R})$ is the design matrix where the input vectors $x_i \in \mathbbm{R}^d$ are stacked as row vectors. In such a supervised context, the training data set is made of all pairs $(x_i, y_i)$ and  can be denoted $\mathcal{D} = \{(x_1, y_1), \dots, (x_n, y_n)\}$. The posterior distribution of the regression vector $\beta$ is then given by:
% \begin{equation*}
\begin{align*}
    p(\beta | \mathcal{D}) &= p(\beta | (x_1, y_1), \dots, (x_n, y_n)) \\
    & \propto p(\beta) p\left((y_1, \dots, y_n) | (x_1, \dots, x_n), \beta \right). 
    \end{align*}
% \end{equation*}
Now, denoting $\textbf{y} \in \mathbb{R}^n$ the \emph{observed} vector of target variables associated to $Y$,  taking the log of the expression, and focusing on the terms in $\beta$:
% \begin{equation*}
\begin{align*}
    \log p(\beta | \mathcal{D}) &= \log p(\beta) + \log p(\textbf{y} |X, \beta) + \mathrm{cst}_1 \\
        &= -\frac{\alpha}{2}\beta^{\intercal} \beta - \frac{1}{2\sigma^2} (\textbf{y} - X\beta)^{\intercal} (\textbf{y} - X\beta) + \mathrm{cst}_2 \\
        &= - \frac{\alpha}{2}\beta^{\intercal}\beta - \frac{1}{2\sigma^2}\left(\textbf{y}^{\intercal}\textbf{y} - 2 \beta^{\intercal}X^{\intercal} \textbf{y} + \beta^{\intercal}X^{\intercal} X \beta \right) + \mathrm{cst}_2 \\
        &= -\frac{1}{2}\beta^{\intercal} (\alpha I_d + \frac{X^{\intercal}X}{\sigma^2})\beta + \frac{\beta^{\intercal} X^{\intercal} \textbf{y}}{\sigma^2} + \mathrm{cst}_3.
        \end{align*}
% \end{equation*}
By identification, we recognize the functional form of a Gaussian distribution: 
\begin{equation*}
    p(\beta | \mathcal{D}) = {\rm MVN}_d(\beta; m_n, \Sigma_n),
\end{equation*}
where 
\begin{equation*}
    \Sigma_n^{-1} = \frac{X^{\intercal}X}{\sigma^2} +  \alpha I_d,
\end{equation*}
and 
% \begin{equation*}
    \begin{align*}
        m_n &= \Sigma_n \frac{X^{\intercal} \textbf{y}}{\sigma^2} \\ 
            &= (\frac{X^{\intercal}X}{\sigma^2} + \alpha I_d)^{-1} \frac{X^{\intercal} \textbf{y}}{\sigma^2} \\
                &= (X^{\intercal}X + \alpha \sigma^2 I_d)^{-1} X^{\intercal} \textbf{y}.
    \end{align*}
% \end{equation*}

Then, the marginal likelihood of the model can easily be obtained simply by writing:
\begin{equation*}
Y = X\beta + \epsilon,     
\end{equation*}
where 
\begin{equation*}
    \epsilon \sim {\rm MVN}_n(0_n, \sigma^2 I_n),
\end{equation*}
and 
\begin{equation*}
    \beta \sim {\rm MVN}_d(0_d, \frac{I_d}{\alpha}).
\end{equation*}
Thus, by construction, $Y$ is defined as a product and sum over the Gaussian random vectors $\beta$ and $\epsilon$, which are independent. Therefore, from Gaussian property:
\begin{equation*}
Y| X\sim {\rm MVN}_n(0_n, \frac{X X^{\intercal}}{\alpha} + \sigma^2 I_n),
\end{equation*}
and so the marginal likelihood is given by:
\begin{equation*}
p(\textbf{y}|X) = {\rm MVN}_n(\textbf{y}; 0_n, \frac{X X^{\intercal}}{\alpha} + \sigma^2 I_n).
\end{equation*}
\end{proof}

\end{proposition}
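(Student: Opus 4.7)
The plan is to prove the two claims of the proposition — the form of the posterior distribution and the form of the marginal likelihood — separately, using the conjugate-Gaussian strategy already applied in Proposition \ref{appendix:multivariateGaussian}.

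For the posterior, I would apply Bayes' rule in log-space and discard additive terms that do not involve $\beta$. Both $\log p(\beta)$ and $\log p(\mathbf{y}\mid X,\beta)$ are quadratic in $\beta$, so $\log p(\beta\mid\mathcal{D})$ is also quadratic and the posterior must be Gaussian. Expanding $-\tfrac{\alpha}{2}\beta^T\beta - \tfrac{1}{2\sigma^2}(\mathbf{y}-X\beta)^T(\mathbf{y}-X\beta)$ and collecting terms, the quadratic piece $-\tfrac{1}{2}\beta^T(\alpha I_d + X^T X/\sigma^2)\beta$ identifies the precision $\Sigma_n^{-1} = X^T X/\sigma^2 + \alpha I_d$, while matching the linear piece $\beta^T X^T \mathbf{y}/\sigma^2$ against the Gaussian template $-\tfrac{1}{2}(\beta - m_n)^T\Sigma_n^{-1}(\beta - m_n)$ forces $m_n = \Sigma_n X^T\mathbf{y}/\sigma^2$. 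Finally I would rewrite this as $m_n = (X^T X + \alpha\sigma^2 I_d)^{-1} X^T\mathbf{y}$ to obtain the stated form.

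For the marginal likelihood, rather than integrate $\beta$ out directly, I would use a distributional argument. Under the model, $Y = X\beta + \epsilon$ with $\beta \sim \text{MVN}_d(0_d, I_d/\alpha)$ and $\epsilon \sim \text{MVN}_n(0_n, \sigma^2 I_n)$ independent given $X$. Since $Y\mid X$ is a linear combination of two independent Gaussian vectors, it is itself multivariate normal; computing its moments gives $E[Y\mid X] = 0_n$ and, by independence of $\beta$ and $\epsilon$, $\text{Cov}(Y\mid X) = X\text{Cov}(\beta)X^T + \text{Cov}(\epsilon) = XX^T/\alpha + \sigma^2 I_n$. This immediately yields the stated form of $p(\mathbf{y}\mid X)$.

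There is no serious obstacle here — this is a classical Bayesian linear regression derivation. The only bookkeeping step requiring care is the matrix identity $(X^T X/\sigma^2 + \alpha I_d)^{-1} X^T/\sigma^2 = (X^T X + \alpha\sigma^2 I_d)^{-1} X^T$ needed to match the stated form of $m_n$, which follows from factoring $1/\sigma^2$ inside the inverse. The distributional route for the marginal likelihood avoids a potentially messy determinant calculation that would appear if one integrated out $\beta$ directly, so I strongly prefer it as the cleanest approach.
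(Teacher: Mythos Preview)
Your proposal is correct and follows essentially the same approach as the paper: complete the square in $\log p(\beta\mid\mathcal{D})$ to read off $\Sigma_n^{-1}$ and $m_n$, then obtain the marginal likelihood by writing $Y = X\beta + \epsilon$ as a linear combination of independent Gaussians and computing its mean and covariance. Even the bookkeeping step you flag---factoring $1/\sigma^2$ to rewrite $m_n$---matches the paper's derivation exactly.
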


\end{document}